\newtheorem{remark}{Remark}
\newcommand{\ours}{\textsc{ReMindRAG}}
\newtheorem{assumption}{Assumption}
\newtheorem{proposition}{Proposition}
\newtheorem{proof}{Proof}
\title{\ours: Low-Cost LLM-Guided Knowledge Graph Traversal for Efficient RAG}
\author{
  Yikuan~Hu$^{1\diamondsuit}$ \quad Jifeng~Zhu$^{1\diamondsuit}$ \quad Lanrui Tang$^{2}$ \quad Chen Huang$^{13}$\thanks{Corresponding author. Work done during his PhD program at Sichuan University.}
  \\
  $^1$College of Computer Science, Sichuan University, China\\
  $^2$School of Computing and Data Science, The University of Hong Kong, Hong Kong, China\\
  $^3$Institute of Data Science, National University of Singapore, Singapore\\
  \texttt{kilgrim@foxmail.com} \quad \texttt{zen1984.kr@gmail.com}\\
  \texttt{ray8823@connect.hku.hk}\quad \texttt{huang\_chen@nus.edu.sg}
}
\begin{document}

\maketitle
\def\thefootnote{$\diamondsuit$}\footnotetext{Both authors contributed equally to this study.}\def\thefootnote{\arabic{footnote}}

\begin{abstract}
Knowledge graphs (KGs), with their structured representation capabilities, offer promising avenue for enhancing Retrieval Augmented Generation (RAG) systems, leading to the development of KG-RAG systems. Nevertheless, existing methods often struggle to achieve effective synergy between system effectiveness and cost efficiency, leading to neither unsatisfying performance nor excessive LLM prompt tokens and inference time. To this end, this paper proposes \ours, which employs an LLM-guided graph traversal featuring node exploration, node exploitation, and, most notably, memory replay, to improve both system effectiveness and cost efficiency. Specifically, \ours~memorizes traversal experience within KG edge embeddings, mirroring the way LLMs "memorize" world knowledge within their parameters, but in a train-free manner. We theoretically and experimentally confirm the effectiveness of \ours, demonstrating its superiority over existing baselines across various benchmark datasets and LLM backbones. Our code is available at \url{https://github.com/kilgrims/ReMindRAG}.

\end{abstract}

\section{Introduction}
\label{sec:intro}

Retrieval-Augmented Generation (RAG) \cite{Base2} has become a prominent method for augmenting Large Language Models (LLMs) with external knowledge resources, enabling them to generate more accurate outputs and thereby expanding their practical utility \cite{Hallucination1,Hallucination2}. However, conventional RAG approaches, which primarily employ dense vector retrieval \cite{Base1,Base3,Base4} for identifying relevant text segments, often exhibit limitations when confronted with intricate tasks that necessitate multi-hop inference \cite{Benchmark2} or the capture of long-range dependencies \cite{Benchmark1}. Knowledge Graphs (KGs), characterized by their structured representation of interconnected entities and relationships, present a compelling alternative. Thus, research efforts are increasingly focused on developing KG-RAG systems, which aims to improve RAG performance using graph-based text representation \cite{KG4,KG7,KG9}.

Usually, a KG-RAG system operates by initially transforming source documents into graph-based representations\footnote{Unlike previous work leveraging established KGs \cite{KG1,KG2,LLM-Traverse7,KGQA1}, this study explores KG-RAG systems that generate their knowledge graph directly from the source text, like GraphRAG \cite{Main3} and LightRAG \cite{Main2}.}. This often involves leveraging LLMs to extract entities and relationships, which are then represented as nodes and edges within the graph. Information retrieval is subsequently performed by traversing this graph to identify nodes relevant to the query. To achieve this, traditional graph search algorithms, like PageRank \cite{PPR,Main3}, and deep learning approaches, like GNN \cite{GNN1}, could be employed for traversal and answer extraction \cite{Main5,KG3,Main9,Cot5}, these methods often fail to capture the nuanced semantic relationships embedded within the graph, leading to unsatisfactory system effectiveness \cite{Main6,KG4,LLM-Traverse5}. By contrast, the LLM-guided knowledge graph traversal approaches demonstrate notable advantages \cite{KG1,KG2,KG5}, but these methods lead to substantially increased LLM invocation costs and a significant rise in irrelevant noise information. Therefore, \textbf{the effective synergy between system effectiveness and cost efficiency presents a major challenge for the practical deployment and scalability of KG-RAG systems}.

To this end, we propose \ours—\textbf{Retrieve and Memorize}, a novel LLM-guided KG-RAG system with memorization. As depicted in \Cref{fig:overall-workflow}, \ours~ first constructs a KG from unstructured texts and subsequently employs two key innovative modules: (1) \textit{LLM-Guided Knowledge Graph Traversal with Node Exploration and Exploitation}. Beyond logical reasoning and semantic understanding capabilities of the LLM, \ours~expands traversal paths using node exploration and node exploitation. This approach enables the LLM to progressively perform single-node expansion and traversal to approximate answers, achieving high-precision retrieval with more hops without requiring large-scale searches like beam search-based \cite{Beam-Search} methods. (2) \textit{Memory Replay for Efficient Graph Traversal}. It memorizes traversal experience within KG edge embeddings, mirroring the way LLMs "memorize" world knowledge within their parameters, but in a train-free manner. It efficiently stores and recalls experience, eliminating the need to consult the LLM, thus substantially improving efficiency while preserving system effectiveness. We theoretically show that for semantically similar queries, this memorization enables edge embeddings to store sufficient traversal experience regarding these queries, facilitating effective memory replay.

\begin{figure}[h!]
    \centering
    \includegraphics[width=0.99\textwidth]{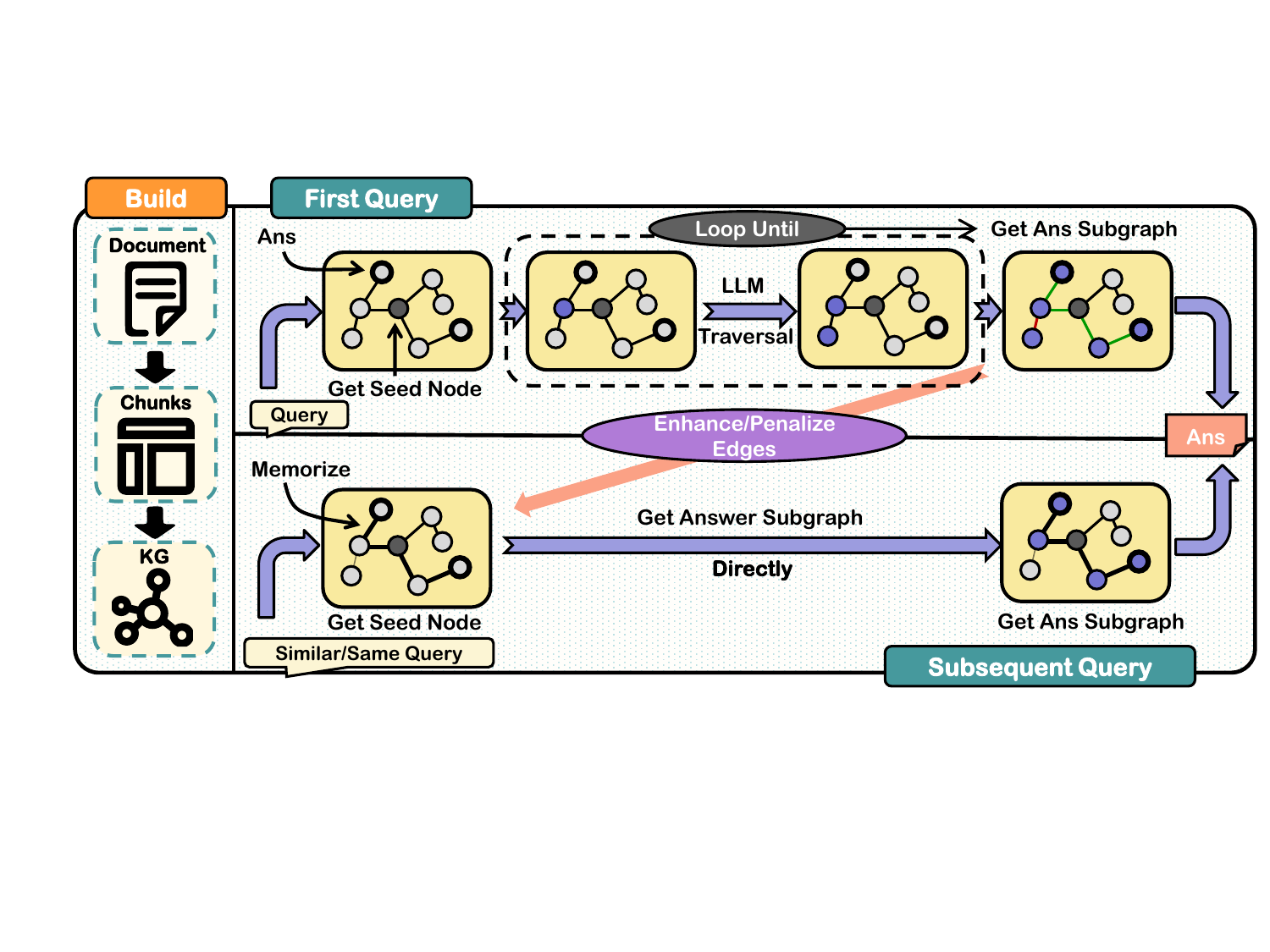}
    \caption{Overall Workflow. \ours~ constructs a KG from unstructured text. It memorizes LLM-guided traversal path, enabling fast retrieval when encountering similar queries subsequently.}
    \label{fig:overall-workflow}
\end{figure}

To evaluate \ours, we conduct extensive experiments across various benchmark datasets and LLM backbones. The experimental results demonstrate that \ours~exhibits a clear advantage over competing baseline approaches, achieving performance gains of 5\% to 10\% while simultaneously reducing the average cost per query by approximately 50\%. Our analysis reveals that the key strengths of \ours~ lie in its inherent self-correction capability and robust memory retention. It can leverage LLMs to self-correct erroneous search paths while maintaining stable memory capabilities even when handling large-scale updates. To sum up, our work makes the following core contributions.
\begin{itemize}[leftmargin = 15pt]
    \item We call attention to the challenge of achieving effective synergy between system effectiveness and cost efficiency, hindering the practical utility of KG-RAG systems.
    \item We introduce \ours, which employs LLM-guided KG traversal featuring node exploration, node exploitation, and, most notably, memory replay, to improve both system effectiveness and cost efficiency.
    \item We theoretically and experimentally confirm the effectiveness of our graph traversal with memory replay, demonstrating \ours's superiority over existing baselines across various benchmark datasets and LLM backbones.
\end{itemize}

\section{Related Works}
Leveraging explicit semantic relations among entities via graph-based text representation, KG-RAG systems often improve overall RAG performance \cite{Main2,Main3,KG4,Main12,xie2025pbsm,HippoRAG1.0}. In these systems, information retrieval is typically performed by traversing the graph to identify nodes relevant to the query. While traditional graph search algorithms and deep learning approaches employ methods such as Personalized PageRank \cite{PPR}, Random Walk with Restart \cite{RWR}, and Graph Neural Networks (GNNs) \cite{GNN1, 10.1145/3589334.3645583} for traversal and answer extraction \cite{Main5,KG3,Main9,Cot5}, these methods often struggle to capture the nuanced semantic relationships embedded within the graph \cite{Main6,KG4,LLM-Traverse5}, leading to unsatisfactory system effectiveness. In contrast, LLM-guided graph traversal methods, which prompt the LLM to make decisions about which node to visit next by virtue of their deep parsing of semantic information and flexible decision-making capabilities in the graph \cite{llm-semantic1}, have demonstrated effective performance \cite{KG1,KG2,KG5,li-etal-2025-graphotter}. However, this approach necessitates repeated calls to the LLM, significantly increasing prompt token consumption and inference time, thus hindering practical deployment \cite{Main5,Main8,Main12}. Moreover, recent benchmarks also indicate that their performance remains limited in complex semantic scenarios \cite{Benchmark1}. Therefore, the trade-off between system effectiveness and cost efficiency presents a major challenge for the practical deployment and scalability of KG-RAG. 

To this end, we propose a novel approach that leverages LLM-guided graph traversal with node exploration, node exploitation, and, most notably, memory replay, to improve both the effectiveness and cost efficiency of KG-RAG. Furthermore, by introducing a training-free memorization mechanism, we achieve a significant advancement in graph traversal. In contrast, existing query retrieval methods are confined to traditional graph pruning \cite{KG4,KG7,Main12} or path planning \cite{Main8,KG5,Main2,Main6,LLM-Traverse5}, with the former risking information loss and the latter offering limited efficiency gains.
\section{Method}
\label{sec:method}

\subsection{LLM-Guided Knowledge Graph Traversal with Node Exploration and Exploitation}
\label{sec::llm-guided-traversal}

\begin{wrapfigure}{r}{0.48\textwidth}
\begin{minipage}{\linewidth} 
    \begin{algorithm}[H]
    
    \caption{LLM-Guided KG Traversal}
    \begin{algorithmic}[1]
    \STATE \textbf{Input}: Query $q$, KG $G$, Seed Count $k$, LLM
    \STATE // \textit{Subgraph Initialization}
    \STATE $S \gets$ \text{top-$k$ nodes in $G$ by similarity to $q$}
    \vspace{1mm}
    \STATE // \textit{Path Expansion Loop}
    \REPEAT
    \STATE // \textit{Node Exploration}
    \STATE $a \gets$ \text{LLM selects most answer-relevant } \\ \text{node $a\in S$ }
    \STATE // \textit{Node Exploitation}
    \STATE $S_a \gets \{p \mid edge (a,p) \in G, p \notin S\}$
    \STATE $p \gets$ \text{LLM chooses optimal expansion} \\ \text{node $p\in S_a$ } 
    \STATE $S \gets S \cup \{p\}$
    \STATE $S \gets S \cup \{(a,p)\}$
    \UNTIL{LLM confirms answer is in $S$}
    \STATE \textbf{Output}: $S$
    \end{algorithmic}
    \label{ag:llm-guided-traversal}
    \end{algorithm}
\end{minipage}
\end{wrapfigure}
\textbf{KG Construction}. Building on prior works \cite{Main2,Main6}, we begin by partitioning unstructured text into discrete chunks. Subsequently, we extract named entity triples from these chunks and construct a heterogeneous knowledge graph. This graph comprises two node types: entities and chunks. It also contains two edge types: \textit{entity-chunk edges} (i.e., entity-to-its-source-chunk) and \textit{entity-entity edges} (extracted triples). For specific details on knowledge graph construction, please refer to \Cref{appendix:kg-details}.

\textbf{Graph Traversal}. As illustrated in \Cref{fig:llm-guided-traversal}, \ours~first identifies a set of entity nodes in the knowledge graph that exhibit the highest semantic relevance to the query based on embedding similarity, marking them as \textit{seed nodes}. These seed nodes further form the initial subgraph $S$, which begins with no edges between them. To facilitate effective graph traversal that balances both exploration and exploitation, \ours~iteratively executes two key operations for traversal path expansion: Node Exploration and Node Exploitation. The iteration continues until the LLM determines that subgraph $S$ contains sufficient information to address the query. Refer to \Cref{ag:llm-guided-traversal} and \Cref{appendix:llm-traversal} for details.

\begin{figure}[t]
    \centering
    \includegraphics[width=\textwidth]{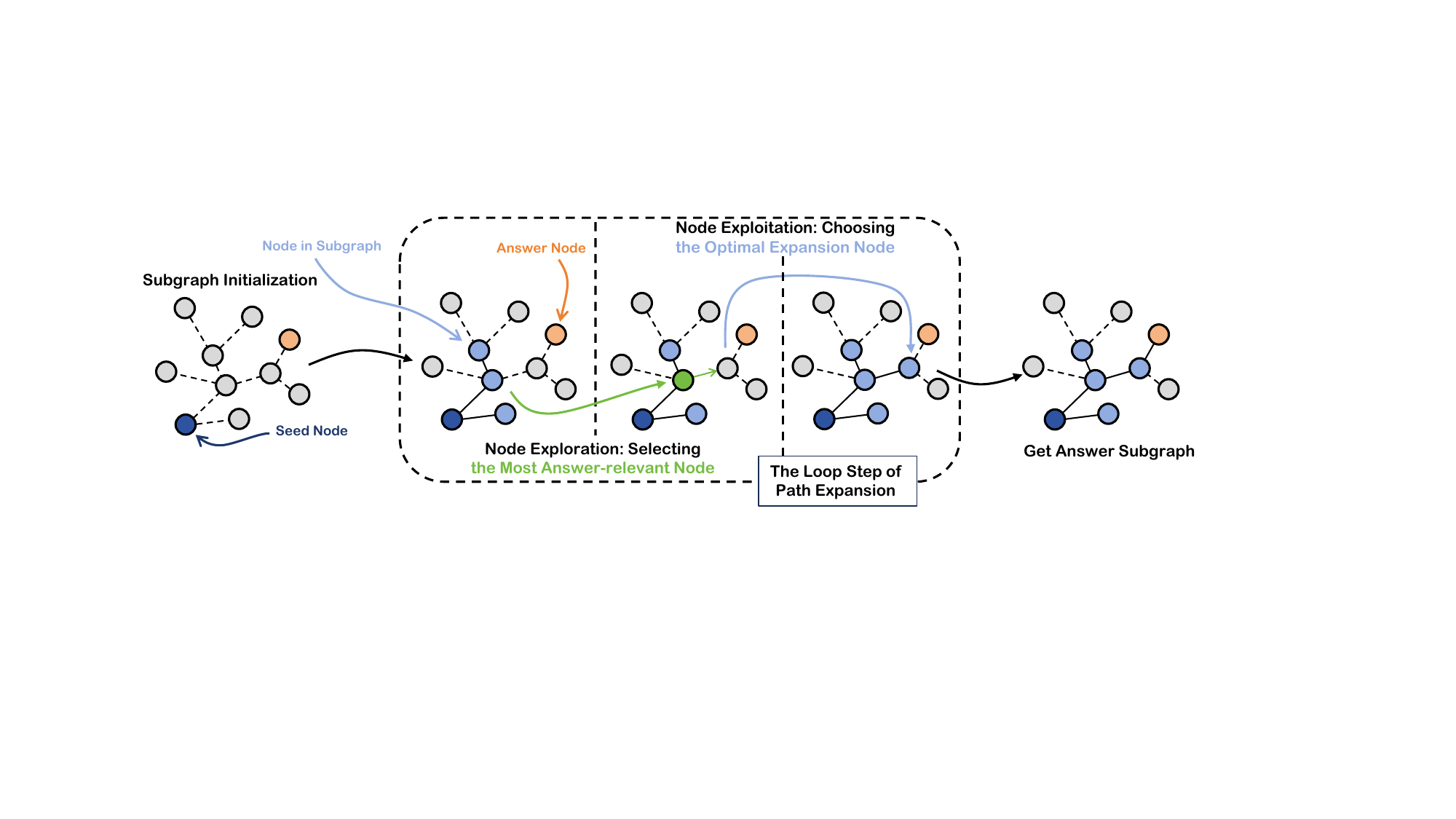}
    \caption{Illustration of LLM-Guided KG traversal in \ours. It iteratively expands the subgraph towards answer nodes through a process that combines node exploration and exploitation.}
    \label{fig:llm-guided-traversal}
\end{figure}

\begin{itemize}[leftmargin=*]
    \item \textbf{Node Exploration Strategy}. This strategy prioritizes exploring potential nodes likely to lead to the answer, ensuring the graph traversal process is not confined to deep exploration of a single path. Instead, it achieves globally optimal search over the entire KG at low computational cost, thereby increasing the likelihood of obtaining the global optimal solution. To achieve this, during each iteration, the LLM evaluates all nodes in subgraph $S$ and selects the target node $a \in S$ deemed most likely to lead to the answer. 
    
    \item \textbf{Node Exploitation}. This focuses on exploiting previously explored nodes and expanding the path along those nodes to approach the answer. Specifically, given the previously selected node $a \in S$, the LLM selects the optimal expansion node $p$ from the adjacent node set $S_a$ of node $a$, considering $p$ to be the most likely to lead to the answer. At the end of the current iteration, the expansion node $p$ and its connecting path to $a$ are added to subgraph $S$ for subsequent traversal. This leverages the LLM's semantic understanding and reasoning capabilities for effective graph traversal.
\end{itemize}

\subsection{Memory Replay for Efficient Graph Traversal}
\label{sec::traversal-memorizing}
Following previous works regarding memory replay \cite{liu2018effects, wangmemory}, which refers to the recall of valuable past memories to facilitate current tasks, we design memory replay for efficient graph traversal. However, our approach diverges from prior work that typically involves recalling explicit samples. Instead, our `memory' is embedded as weights within the KG, and it is these weights that are exclusively utilized to guide LLM inference and graph traversal. In particular, we design \ours~to memorize LLM's traversal experience within the KG. Consequently, when processing identical, similar and subtly different queries, this memorized experience can be recalled before the LLM-Guided KG Traversal process, reducing the number of LLM calls and significantly enhancing the efficiency while maintaining system effectiveness.

\subsubsection{Memorizing the Traversal Experience within the KG}

\textbf{What to Memorize}. In our work, both effective traversal paths, which lead to the correct answer, and ineffective paths are valuable, as they can provide positive and negative reinforcement, respectively, to guide the LLM's future traversal decisions. Given that the final subgraph $S$ contains a mix of both, we employ a filtering operation: For effective paths, we first use the LLM to identify the edges and chunks that contributed to generating the answer\footnote{We do not provide the ground truth answer to the LLM during this process. The answer here is the one produced by the LLM itself.}, and then use a Depth-First Search (DFS) \cite{dfs} algorithm to extract these paths. The remaining paths are then classified as ineffective.

\begin{equation}
    \begin{aligned}
        \text{\textit{Weight Function}: } \delta(\bm{x}) &= \frac{2}{\pi}\cos\left(\frac{\pi}{2} \|\bm{x}\|_2\right) \\
        \text{\textit{Enhance Effective Path}: } \hat{\bm{v}} &= \bm{v} + \delta(\bm{v}) \cdot \frac{\bm{q}}{\|\bm{q}\|_2} \\
        \text{\textit{Penalize Ineffective Path}: } \hat{\bm{v}} &= \bm{v} - \delta(\frac{\bm{v} \cdot \bm{q}}{\|\bm{q}\|_2}) \cdot \frac{\bm{v} \cdot \bm{q}}{\|\bm{q}\|_2}
    \end{aligned}
\label{eq:update-function}
\end{equation}

\textbf{How to Memorize}. We propose a train-free memorization method to further improve efficiency. This is achieved by memorizing traversal experiences within the KG, which are formalized as edge embeddings and updated using closed-form equations. In particular, after the LLM produces an answer for a query, we update the edge embedding for each edge in the final subgraph $S$\footnote{Each edge in KG has an associated embedding vector, which is subject to updating if that edge is included in the final subgraph.}. Formally, let $v \in R^K$ denote the edge embedding, initialized as a zero vector, where $K$ is the vector dimension. For edges belonging to effective paths within $S$, we update $v$ in the direction of the query embedding $q$ generated by the LLM. Conversely, for edges belonging to ineffective paths, we reduce the component of $v$ along the query embedding $q$ (As shown in \Cref{fig:update-fig}). \Cref{eq:update-function} details the update rules, where $\hat{v}$ represents the updated edge embedding and $\delta$ is the update weight function, which is essentially a cosine function based on the vector's norm.

\begin{figure}[htbp]
    \centering
    \begin{minipage}{0.48\textwidth}
        \centering
        \vphantom{\rule{0pt}{0.7cm}}
        \includegraphics[width=\linewidth]{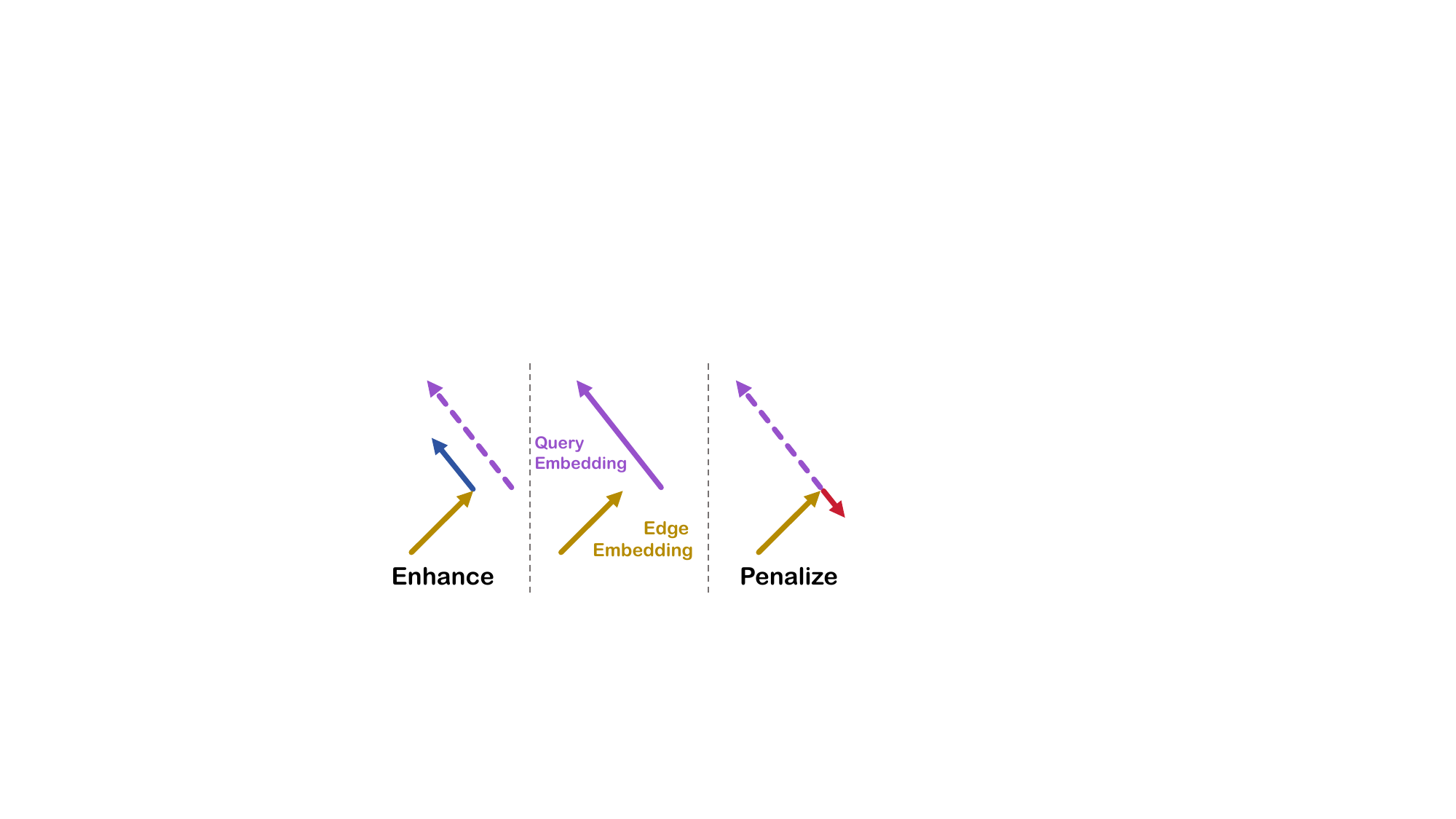}
        \vphantom{\rule{0pt}{0.7cm}}
        \caption{Schematic Diagram of Memorizing.}
        \label{fig:update-fig}
    \end{minipage}
    \hfill
    \begin{minipage}{0.48\textwidth}
        \centering
        \includegraphics[width=\linewidth]{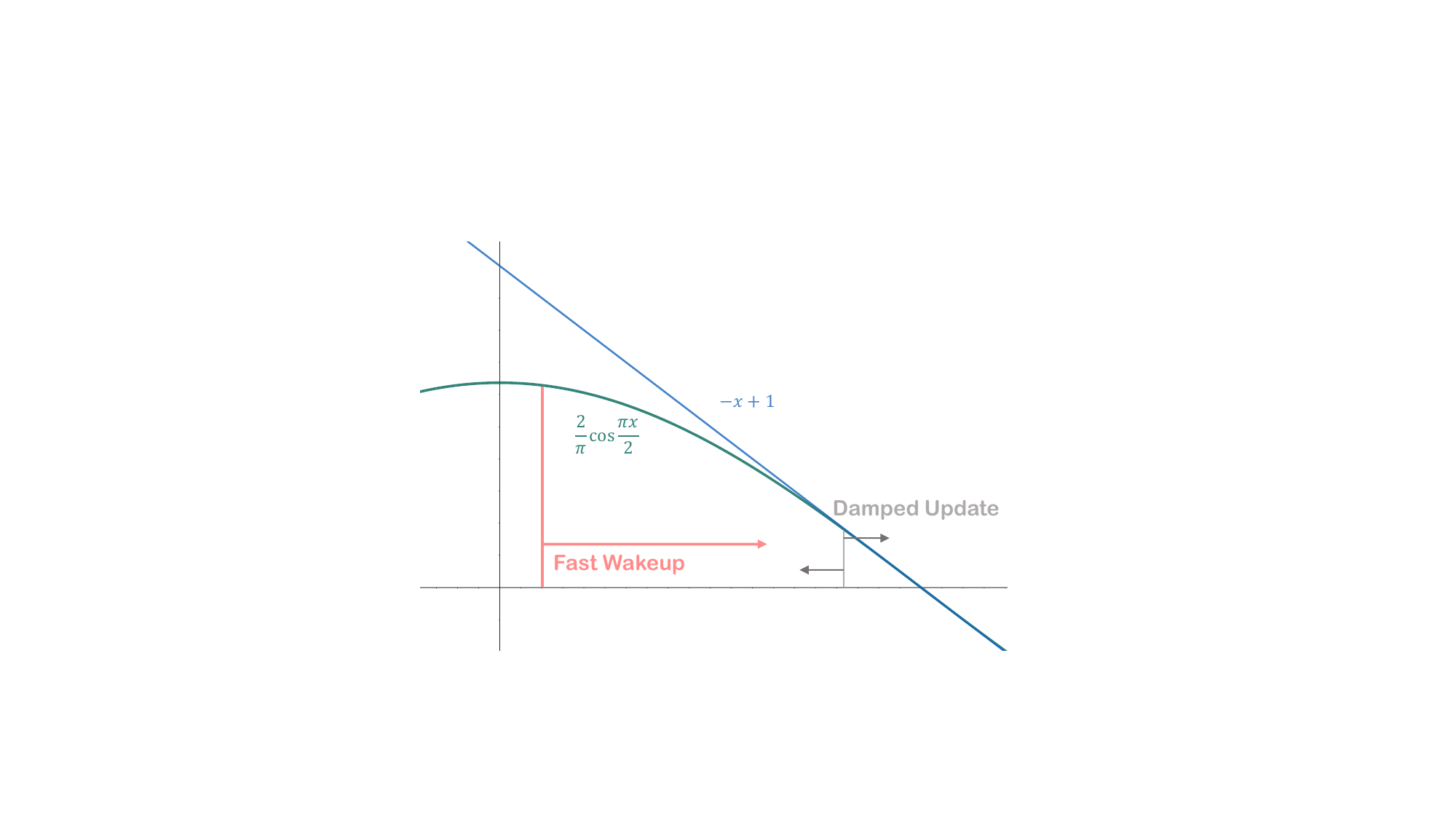}
        \caption{Understanding Fast Wakeup \& Damped Update via weight function $\delta$.}  
        \label{fig:delta-fig}
    \end{minipage}
\end{figure}

\textbf{Intuitions Behind Memorizing Process}. Our closed-form memorization rules, as outlined in \Cref{eq:update-function}, exhibit two key characteristics: rapid learning of traversal experiences and sustained stability across multiple memorization events. These characteristics are achieved through the \textit{Fast Wakeup} and \textit{Damped Update}, respectively.
\begin{itemize}[leftmargin=*]
    \item \uline{Fast Wakeup}. When the norm of the edge embedding $v$ is small (typically in the initial all-zero state), it indicates that $v$ retains little memorized information. In this state, we aim for the edge embedding to rapidly learn the traversal experience with only a few updates. As shown in \Cref{fig:delta-fig}, when the norm is small, the $\delta$ function induces a substantial directional update to the edge embedding, thus enabling Fast Wakeup.
    \item \uline{Damped Update.} Once the edge embedding $v$ has learned substantial information (i.e., the norm of $v$ is large), our objective is to maintain the stability of the edge embedding, preventing it from easily losing its learned information due to subsequent updates. As illustrated in \Cref{fig:delta-fig}, the $\delta$ function induces only minor updates when the norm is large, exhibiting resistance to change while retaining the ability to correct errors through gradual forgetting.
\end{itemize}

\subsubsection{Efficient Subgraph Expansion via Memory Replay}
\label{sec::memory-replay}

To enhance traversal efficiency, the memory replay mechanism enriches the initial subgraph $S$ prior to the LLM-guided path expansion (detailed in \Cref{ag:llm-guided-traversal}). This mechanism achieves rapid subgraph expansion by incorporating potentially relevant nodes and edges from historical traversal experiences, thereby reducing reliance on iterative LLM calls.

Formally, let $q$ represent a user query and $G$ denote a knowledge graph, where each edge $E_{ab}$ connecting nodes $a$ and $b$ possesses an associated, updatable embedding vector $v_{ab}$. We define $\text{emb}(\cdot)$ as the text embedding function and $\text{sim}(\cdot, \cdot)$ as the node embedding similarity function. The Subgraph Initialization, detailed in \Cref{ag:llm-guided-traversal}, begins by generating a graph comprising only seed nodes. From each seed node, a recursive Depth-First Search (DFS) is performed to expand the subgraph by adding traversed nodes and edges. This DFS operation proceeds as follows: For a given current node $N_{\text{from}}$ and its unvisited neighbor $N_{\text{to}}$, the system assesses $N_{\text{to}}$'s relevance to both $N_{\text{from}}$ and the query $q$, estimating its potential contribution to the final answer. If this relevance surpasses a predefined threshold $\lambda$, $N_{\text{to}}$ and its connecting edge are incorporated into subgraph $S$, and the DFS process is recursively invoked on $N_{\text{to}}$. The relevance metric $w_{N_{\text{from}},N_{\text{to}}}$, as defined in this work, consists of two components: (1) the semantic relevance between nodes $N_{\text{from}}$ and $N_{\text{to}}$\footnote{This component enables user control over \ours's memory sensitivity.}; and (2) the alignment of traversing the edge with satisfying query $q$, which constitutes \ours's core memory function. The hyperparameter $\alpha$ adjusts the relative importance of these two terms.

\begin{equation}
    w_{N_{from},N_{to}} = \alpha \cdot \mathrm{sim}\big(\mathrm{emb}(N_{from}), \mathrm{emb}(N_{to})\big) + (1-\alpha) \cdot \frac{\mathrm{emb}(q) \cdot v_{N_{from},N_{to}}}{\|\mathrm{emb}(q)\|}
\label{eq:dfs-step-in}
\end{equation}

After subgraph expansion via memory replay, the subgraph is expected to contain sufficient information for answering the input query. However, if the LLM judges that the expanded subgraph still lacks sufficient information, \ours~continues the Path Expansion Loop, a process that can be regarded as verifying and correcting memorized information.

\section{Theoretical Analysis}
\label{sec:theoretical-analysis}
Our theoretical analysis (cf. \textbf{\Cref{appendix:theoretical-analysis}} for a detailed proof) of the memory capacity demonstrates that, given a set of queries exhibiting some degree of semantic similarity, our memorization rules in \Cref{eq:update-function} enable edge embeddings to memorize these queries and store sufficient information, thereby enabling effective memory replay. Formally, when the embedding dimension $d$ is sufficiently large, given the threshold $\lambda$ in Section \ref{sec::memory-replay}, for any given edge, when the maximum angle $\theta$ between all query embedding pairs within a query set satisfies \Cref{eq:lamba-theta}, the edge can effectively memorize the semantic information within that query set. 
\begin{equation}
    \theta \leq \lim_{d \to \infty} \left[ 2\arcsin\left(\sqrt{\frac{1}{2}}\sin(\arccos(\lambda))\right) \right]
\label{eq:lamba-theta}
\end{equation}

\begin{remark}
    To satisfy the storage requirements\footnote{While this provides a solution under theoretical conditions for parameter configuration, practical applications require consideration of multiple factors when setting parameters. Additional details are in \Cref{appendix:parameter-configuration}.}, the theoretical upper bound for $\lambda$ is determined to be 0.775. This determination aligns with existing research \cite{Base1}, which establishes a cosine similarity threshold of 0.6 as the criterion for assessing semantic similarity.
\end{remark}

\begin{remark}
Our assumption that the embedding dimension $d$ is infinitely large is based on the fact that linear embedding models do have very high dimensions. For the function $\sqrt{\frac{d+1}{2d}}$, when $d$ is greater than around 100 (currently, the embeddings of an LLM often exceed 100 dimensions), it can actually be approximated to $\frac{1}{2}$. This approximation thus holds significant practical utility. See our empirical experiments in Section \ref{ms}.
\end{remark}
\section{Experimental Analysis}
\label{sec:evaluation}

\subsection{Experiment Setup}
\label{sec::experiment-setup}

We conduct extensive experiments across various benchmark datasets and LLM backbones. For \textit{implementation details} and \textit{additional analysis}, refer to \Cref{appendix:implementation-details} and \Cref{appendix::more-experiments}, respectively.

\textbf{Tasks \& Datasets.} To comprehensively evaluate the performance of the proposed \ours, we follow previous works \cite{Main6,Benchmark1} and conduct experiments on three benchmark tasks: Long Dependency QA, Multi-Hop QA, and Simple QA. This enables systematic assessment of retrieval effectiveness and efficiency across various scenarios. Refer to \Cref{appendix:dataset-modification} for details.

\begin{enumerate}[label=\arabic*),leftmargin = 15pt]
\item \uline{Long Dependency QA}: We resort to the LooGLE dataset \cite{Benchmark1}. It requires the system to identify implicit relationships between two distant text segments, testing the RAG system's long-range semantic retrieval capability. The long-dependency QA task in the LooGLE dataset (with evidence fragments spaced at least 5,000 characters apart) effectively evaluates the long-range dependency capture performance of RAG solutions in single-document scenarios.

\item \uline{Multi-Hop QA}: We resort to the HotpotQA dataset \cite{Benchmark2}. It focuses on complex question-answering scenarios requiring multi-step reasoning, demanding the model to chain scattered information through logical inference. 

\item \uline{Simple QA}: This task leans towards traditional retrieval tasks, emphasizing the model's ability to extract directly associated information from local contexts. We adopt the Short Dependency QA from the LooGLE \cite{Benchmark1} dataset as a representative example.
\end{enumerate}

\textbf{Evaluation Metrics}. (1) \uline{Effectiveness evaluation}. Following the evaluation protocol of recent benchmark \cite{Benchmark1}, for each question, we employ LLM-as-judgment (GPT-4o) \cite{llm-as-a-judge} to determine the answer accuracy: whether the model-generated answer adequately covers the core semantic content of the ground truth through semantic coverage analysis. Refer to \Cref{appendix::answer-evaluation} for more details.  (2) \uline{Cost-efficiency evaluation}. We use the average number of tokens consumed by the LLM per query during traversal as our cost-efficiency metric.

\textbf{Baselines.} We consider following baselines\footnote{For additional analysis on baselines, please refer to \Cref{appendix::supplementary-experimental-details}.}: (1) Traditional retrieval methods: BM25 \cite{BM25} and NaiveRAG \cite{Base2}. (2) KG-RAG systems using graph search algorithms: GraphRAG \cite{Main3}, LightRAG \cite{Main2} and HippoRAG2 \cite{Main6}. (3) LLM-guided KG-RAG system: Plan-on-Graph \cite{KG1}.

\textbf{Backbone.} We employed two backbones: (1) \uline{GPT-4o-mini} \cite{gpt-4o-mini}, a highly cost-effective and currently the most popular backbone. (2) \uline{Deepseek-V3} \cite{deepseek-v3}, the latest powerful open-source dialogue model.

\subsection{Understanding the Effectiveness and Cost-Efficiency of \ours}
\label{sec::evaluation-understanding1}

\subsubsection{System Effectiveness Evaluation}
As shown in \Cref{tab:overall-performance}, \ours~consistently outperforms the baselines across different LLM backbones, with performance fluctuations confined to a small range. Particularly noteworthy is the significant performance improvement achieved on Long Dependency QA and Multi-Hop QA tasks (+12.08\% and +10.31\% over the best baseline on average), while simultaneously maintaining high accuracy on Simple QA tasks (+4.66\% over the best baseline on average). This phenomenon indicates that \ours~not only exhibits strong fundamental retrieval capabilities but also effectively addresses long-range information retrieval and multi-hop reasoning challenges. This advantage primarily stems from two factors: (1) By leveraging the reasoning capabilities of LLMs, the system can perform semantic-driven active reasoning on KGs, enabling more precise answer retrieval. (2) The balanced node exploration and exploitation strategy employed during LLM-guided KG traversal allows for the achievement of both globally and locally optimal search results.

\begin{table*}[h!]
    \setlength{\tabcolsep}{0.5em}
    \renewcommand{\arraystretch}{1.2}
    \resizebox{\textwidth}{!}{
    \begin{tabular}{l|c|c|c|c|c|c}
        \toprule
        QA Type & \multicolumn{2}{c|}{Long Dependency QA} & \multicolumn{2}{c|}{Multi-Hop QA} & \multicolumn{2}{c}{Simple QA} \\
        \midrule
        Backbone & GPT-4o-mini & Deepseek-V3 & GPT-4o-mini & Deepseek-V3 & GPT-4o-mini & Deepseek-V3 \\
        \midrule
        \midrule
        \textbf{\ours}~(\textit{Ours})  & \textbf{57.04\%} & \textbf{59.73\%} & \textbf{74.22\%} & \textbf{79.38}\% & \textbf{76.67\%} & \textbf{77.01\%} \\ \midrule
        Plan-on-Graph \cite{KG1} & 27.78\% &  19.46\% & 58.51\% & 47.87\% & 38.26\% & 32.89\% \\ \midrule
        HippoRAG2 \cite{Main6} & \underline{39.60\%} &  38.26\% & \underline{68.04\%} & \underline{64.95\%} & \underline{73.08\%} & \underline{71.28\%} \\
        LightRAG \cite{Main2}  & 37.58\% &  \underline{53.02\%} & 45.36\% & 53.60\% & 49.49\% & 62.82\% \\
        GraphRAG \cite{Main3}  & 26.03\% &  9.86\% & 58.70\% & 22.95\% & 21.96\% & 7.20\% \\ \midrule
        NaiveRAG \cite{Base2}  & 36.91\% &  46.98\% & 58.76\% & 56.70\% & 47.95\% & 66.41\% \\
        BM25 \cite{BM25} & 22.82\% & 33.56\% & 50.52\% & 54.64\% & 45.13\% & 52.56\% \\ \midrule
        LLM Only (w/o RAG)  & 16.11\% & 29.53\% & 40.21\% & 54.64\% &  18.21\% & 30.00\% \\
        \bottomrule
    \end{tabular}
    }
    \caption{Effectiveness Performance (i.e., Answer Accuracy).}
    \label{tab:overall-performance}
\end{table*}

\subsubsection{Evaluation on System Cost-Efficiency and Memorization}
\label{sec:::evaluation-cost-efficiency}

\textbf{Setups}. To evaluate the cost efficiency (along with its effectiveness) of \ours, we conducted multiple sets of comparative experiments under identical conditions. Here, given a dataset $A$ containing documents and user queries, we consider three setups: (1) \uline{Same Query}: The model initially has already been evaluated on dataset $A$ and has memorized information from $A$ and is subsequently re-evaluated again on the same dataset. (2) \uline{Similar Query}: The model is re-evaluated again on a dataset $A'$ whose queries are semantically equivalent paraphrases of those in $A$ (cf. \Cref{appendix::similar-question-rewrite} for implementation). (3) \uline{Different Query}: The model is re-evaluated again on a dataset $A''$ whose queries are distinct from those in $A$ but share similar questions (cf. \Cref{appendix::different-question-rewrite} for implementation). Importantly, of all the datasets we tested, \textit{only the Multi-Hop QA dataset satisfies the criteria for the Different Query setup}\footnote{We use the "hotpot\_dev\_distractor" version, which contains intentionally designed distractors.}. Additionally, to evaluate the memorization capability of \ours~comprehensively, we further consider \textit{multi-turn memorization}. This involves initially evaluating \ours~ on the same dataset $A$ multiple times (i.e., memorizing $A$ multiple times). 

As shown in \Cref{tab:updated-performance}, \ours~demonstrates significant cost reduction when utilizing memory replay. Furthermore, after the multiple-turn memorization, \ours~achieves a 58.8\% reduction in token consumption compared to the initial trial, on average. These optimizations are primarily due to \ours's ability to learn and memorize traversal experiences via dense embedding techniques. When encountering identical or semantically similar queries, \ours~can rapidly identify relevant structures based on its memories to initialize the subgraph $S$, thereby reducing the frequency of LLM calls and significantly lowering costs. Additionally, we observe that when handling multiple distinct questions pertaining to the same underlying document (i.e., the Different Query scenario), \ours~effectively stores and utilizes traversal knowledge acquired from various queries. This observation aligns with our theoretical findings presented in \Cref{sec:theoretical-analysis}, which state that given a set of queries exhibiting semantic similarity, our memorization rules enable edge embeddings to effectively memorize these queries and store sufficient information, thereby enabling effective memory replay.

\begin{table*}[h!]
    \setlength{\tabcolsep}{0.5em}
    \renewcommand{\arraystretch}{1.2}
    \resizebox{\textwidth}{!}{
    \begin{tabular}{l|c|c|c|c|c|c|c|c|c|c}
        \toprule
        \multirow{2}{*}{QA Type} & \multicolumn{4}{c|}{Long Dependency QA} & \multicolumn{6}{c}{Multi-Hop QA} \\
        \cline{2-11}
         & \multicolumn{2}{c|}{Same Query} & \multicolumn{2}{c|}{Similar Query} & \multicolumn{2}{c|}{Same Query} & \multicolumn{2}{c|}{Similar Query} & \multicolumn{2}{c}{Different Query} \\
        \midrule
        Methods & Accuracy & Tokens & Accuracy & Tokens & Accuracy & Tokens & Accuracy & Tokens & Accuracy & Tokens \\
        \midrule
        \midrule
        \ours & & & & & & & & & & \\
        \quad-\textit{3-turn Memorization} & 60.31\% &  6.71K & 57.25\% &  7.02K & 87.62\% & 5.89K & 78.72\% & 6.02K & 77.66\% & 9.76K\\
        \quad-\textit{2-turn Memorization} & 58.01\% &  7.55K & 54.96\% &  9.85K & 84.04\% & 6.56K & 73.40\% & 9.34K & 74.47\% & 9.50K\\
        \quad-\textit{1-turn Memorization} & 56.48\% &  9.68K & 55.03\% &  13.98K & 79.78\% & 7.73K & 75.53\% & 9.88K & 72.34\% & 10.57K\\
        \quad-\textit{No Memorization} & 57.04\% &  14.91K & / &  / & 74.22\% & 10.16K & / & / & / & /\\
        \midrule
        Plan-on-Graph & 27.78\% &  30.29K & 30.87\% &  39.08K & 58.51\% & 5.53K & 61.19\% & 7.01K & 62.67\% & 5.45K\\
        HippoRAG2 & 39.60\% &  / & 37.58\% &  / & 68.04\% & / & 63.91\% &  / & 69.44\% & /\\
        \bottomrule
    \end{tabular}
    }
    \caption{Evaluation on System Cost-Efficiency and Memorization under three query types and multi-turn memorization: GPT-4o-mini is used as the backbone. "Tokens" here refer to the average number of tokens consumed by the LLM per query during traversal.}
    \label{tab:updated-performance}
\end{table*}

\subsection{Understanding the Characteristics of \ours}
\label{sec::evaluation-understanding2}
This section explores features of \ours, including its self-correction capabilities and memory stability. For more in-depth analyses concerning graph construction efficiency, the effect of contextual data, and the influence of varying max hop counts, please refer to \Cref{appendix::more-experiments}.

\subsubsection{Self-Correction of \ours}
As shown in \Cref{tab:updated-performance}, \ours~ achieves a 5\%-10\% performance improvement after multi-turn memorization, revealing the potential for self-correction in graph traversal. For a better understanding of this behavior, we provide a case study\footnote{Demonstration screenshots from our operational web interface are available in \Cref{appendix:webui-screenshots}.} presented in \Cref{fig:case-study-1}. The case study demonstrates that when \ours~ initially provides an incorrect answer subgraph, the system can still penalize irrelevant nodes in the knowledge graph based on the memorization rules (\Cref{eq:lamba-theta}) without explicitly receiving the correct answer. In subsequent queries, \ours~ switches to traversing the memory-optimized subgraph, thereby achieving self-correction of errors. Furthermore, \Cref{fig:case-study-2} showcases \ours's robust self-correction in the "Different Query" setting. When handling queries that are embedding-similar but fundamentally distinct, the candidate answer subgraphs retrieved via similarity matching fail to address the current query. In such cases, \ours~ fully leverages the deep semantic understanding capabilities of LLMs to identify mismatches between the candidate answer subgraphs and the given query, and then autonomously triggers the LLM-based graph traversal mechanism to re-execute the reasoning process. \ours~ still demonstrates excellent performance and remarkable cost reduction on these challenging problems.

\begin{figure}[h]
    \centering
    \includegraphics[width=\textwidth]{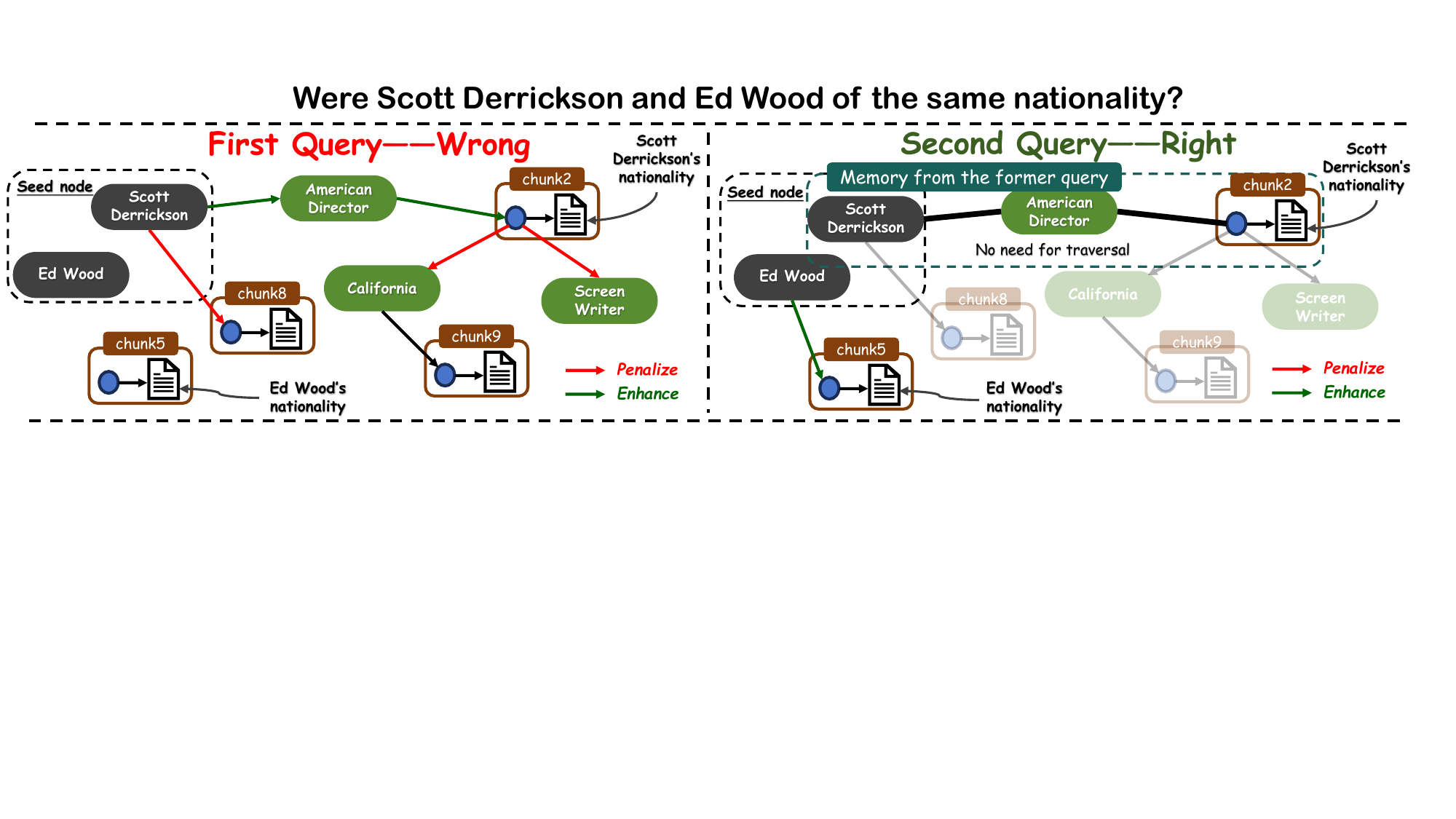}
    \caption{The case of memory replay under the Same Query setting. In the first search, no relevant info about 'Ed Wood' is found, but instead some unrelated details about 'Scott Derrickson' appear. After the search, the system enhances useful paths and penalizes irrelevant ones. The second search then focuses on the subgraph with the correct info about 'Scott Derrickson', achieving self-correction.}
    \label{fig:case-study-1}
\end{figure}

\subsubsection{Memory Stability} 
\label{ms}
\begin{wraptable}{r}{0.6\textwidth}
    \centering
    \resizebox{0.6\textwidth}{!}{
    \begin{tabular}{l|c|c|c}
        \toprule
        \ours~w/ multi-turn memo. & QA Type & Accuracy & Tokens \\
        \midrule
        \quad-\textit{5-turn Memorization} & Different & 77.66\% &  6.33K \\
        \quad-\textit{4-turn Memorization} & Origin & 80.85\% &  5.72K \\
        \quad-\textit{3-turn Memorization} & Different & 76.60\% &  6.75K \\
        \quad-\textit{2-turn Memorization} & Origin & 78.72\% &  6.91K \\
        \quad-\textit{1-turn Memorization} & Different & 72.34\% &  10.57K \\
        \quad-\textit{No Memorization} & Origin & 74.22\% &  10.16K \\
        \bottomrule
    \end{tabular}
    }
    \caption{Accuracy of \ours~under complex multi-turn memorization settings (Multi-Hop QA \& GPT-4o-mini). "Origin" represents the original question, while "Different" denotes the modified question (as shown in \textbf{Setup} (3) in \Cref{sec:::evaluation-cost-efficiency}).}
    \label{tab:alternating-performance}
\end{wraptable}

This section aims to evaluate the memory stability of \ours~in more complex scenarios. 

\textbf{Impact of Distinct Queries}. Unlike the multi-turn memorization task described in \Cref{sec:::evaluation-cost-efficiency} (where \ours~repeatedly memorizes the same dataset), this experiment requires \ours~to alternately process and memorize heterogeneous datasets associated with distinct queries. The alternating update procedure frequently involves cyclic operations of enhancing and penalizing specific edges, rigorously validating \ours's memory stability under extreme conditions. As shown in \Cref{tab:alternating-performance}, \ours~demonstrates sustained promising performance and cost reduction, which is a direct result of our designed "Fast Wakeup" and "Damped Update" mechanisms detailed in \Cref{sec::traversal-memorizing}. During multi-round memorization, \ours~demonstrates rapid content memorization while preserving stability in retaining effective memories, highlighting its potential for handling dynamic update patterns in complex scenarios.

\begin{figure}[h]
    \centering
    \includegraphics[width=\textwidth]{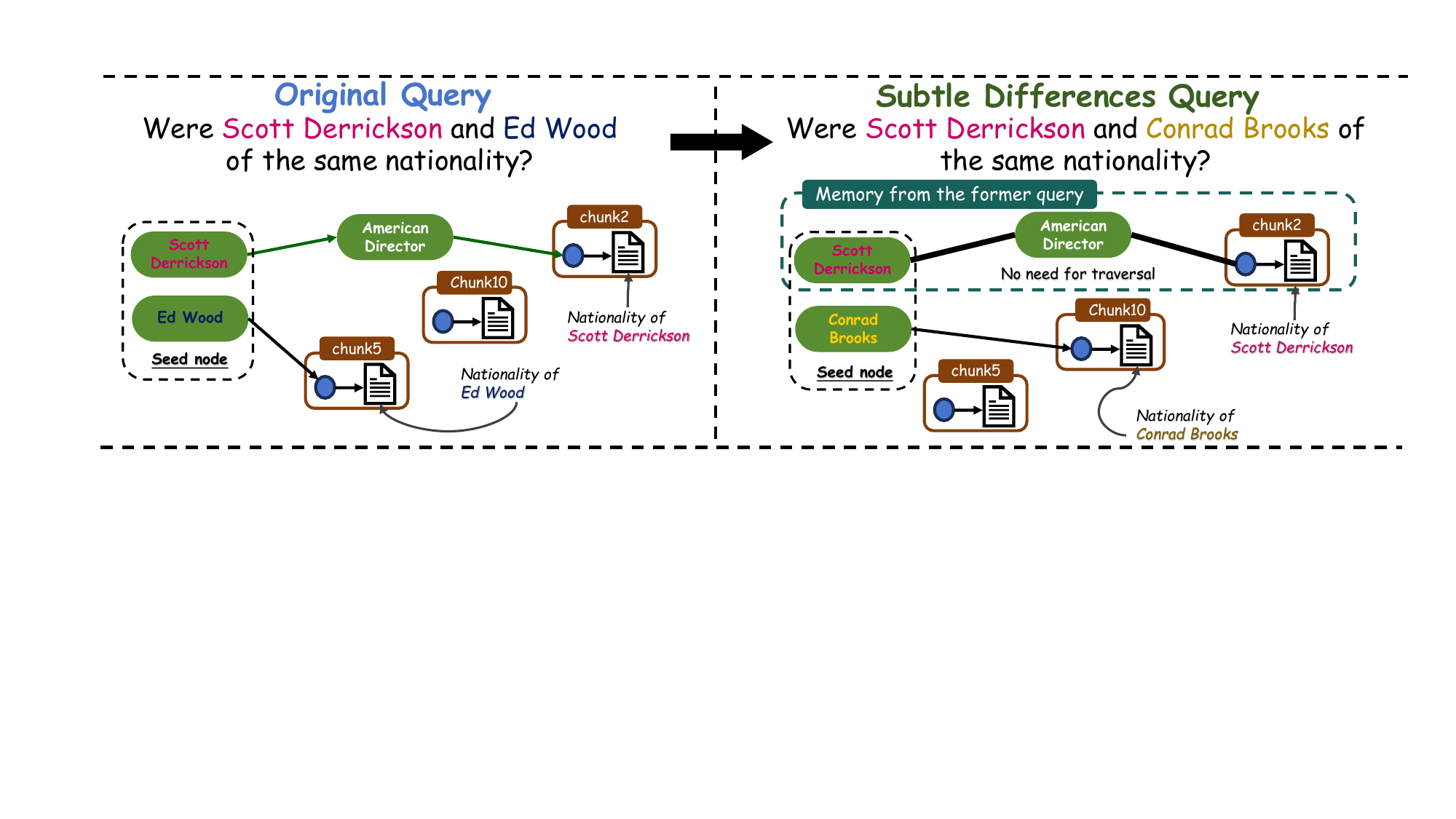}
    \caption{The case of memory replay under the Different Query setting. In this case, a person's name is replaced in the second query, resulting in different answer subgraphs. However, the LLM is able to detect the incompleteness of the candidate answer subgraph and restart the traversal process. A complete demonstration of the entire process can be found in \Cref{appendix::complete-log}.}
    \label{fig:case-study-2}
\end{figure}

\begin{table*}[h!]
    \setlength{\tabcolsep}{0.5em}
    \renewcommand{\arraystretch}{1.2}
    \centering 
    \resizebox{0.9\textwidth}{!}{
    \begin{tabular}{l|c|c|c|c|c|c|c}
        \toprule
        Dimension & 768 & 384 & 192 & 96 & 48 & 24 & 12 \\
        \midrule
        \midrule
        \quad-\textit{No Memorization} & 57.04\% & 56.37\% & 55.36\% & 54.69\% & 54.36\% & 52.78\% & 52.08\% \\
        \quad-\textit{1-turn Memorization} & 56.48\% & 52.34\% & 54.03\% & 53.02\% & 52.97\% & 52.95\% & 52.55\% \\
        \quad-\textit{2-turn Memorization} & 58.01\% & 57.71\% & 55.03\% & 53.35\% & 53.28\% & 53.42\% & 53.15\% \\
        \bottomrule
    \end{tabular}
    }
    \caption{Dimensional Truncation Experiment. The horizontal axis represents the dimension truncation ratio. The vertical axis denotes the number of updates. The values in the table correspond to the answer accuracy, and all experiments were conducted with the same default parameters.}
    \label{tab:dimension-reduce-experiment}
\end{table*}

\textbf{Impact of Embedding Dimensions on Memory Capacity}. While our theoretical analysis in Section \ref{sec:theoretical-analysis} establishes \ours's memory capacity, it hinges on an approximation where the LLM's embedding dimension approaches infinity (details in Appendix \ref{appendix:theoretical-analysis}). Although Appendix \ref{appendix:theoretical-analysis} argues this approximation remains practically useful for large dimensions (e.g., 100), this section empirically demonstrates the robustness of our memory capacity even with varied, truncated dimensions. As detailed in \Cref{tab:dimension-reduce-experiment}, experiments with reduced embedding dimensions show negligible impact on the overall system. We hypothesize this resilience stems from the embedding model's Matryoshka Representation Learning \cite{matryoshka}, allowing normal operation even at low dimensions.

\section{Conclusion}

In this paper, we introduce \ours, a novel approach that leverages LLM-guided knowledge graph traversal featuring node exploration, node exploitation, and, most notably, memory replay, to improve both the effectiveness and cost efficiency of KG-RAG. We theoretically demonstrate that the memorization capability of \ours~is strong. It could store sufficient information to enable effective memory replay. Our experimental results show that \ours~achieves promising performance compared to baselines across multiple benchmark datasets and LLM backbones. In-depth analysis reveals that \ours~effectively memorizes the LLM's traversal experience, reducing costs and enhancing efficiency in subsequent queries under various query settings (i.e., Same, Similar, and Different). Going forward, while the initial graph traversal of \ours~still requires multiple LLM calls, we plan to initialize the model with pre-existing, well-curated FAQs specific to the deployment domain or scenario. This will provide the model with a strong baseline memory, capable of handling common situations and accelerating the graph traversal process.

\clearpage
\bibliographystyle{plain}
\bibliography{refs}

\clearpage
\appendix
\section{Heterogeneous Knowledge Graph in \ours}
\label{appendix:kg-details}

\subsection{Heterogeneous Knowledge Graph Formalization}
\label{appendix::kg-structure-details}

\begin{wrapfigure}{r}{0.48\textwidth}
    \centering
    \includegraphics[width=0.48\textwidth]{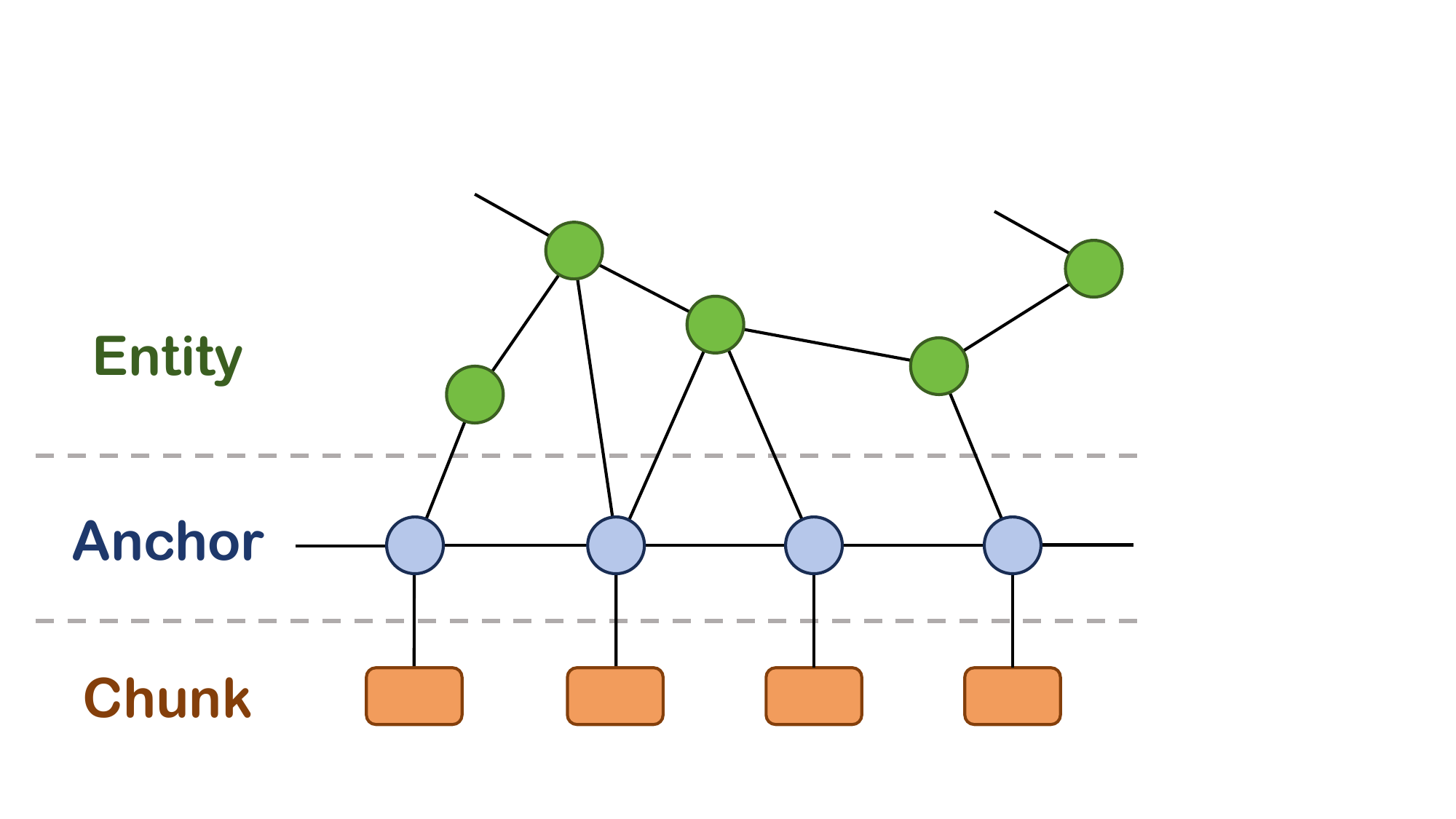}
    \caption{Illustration of Our KG.}
    \label{fig:kg-example}
\end{wrapfigure}

Inspired by prior works \cite{Auto-KG1,Auto-KG2,Main2,Main6}, in the automated knowledge graph construction framework, our designed heterogeneous knowledge graph consists of an entity node set $\mathcal{E}$, an anchor node set $\mathcal{A}$ (storing text chunk titles), a text chunk set $\mathcal{C}$, and relational connections. The subgraph formed by entity nodes is composed of "entity-relation-entity" triples extracted by the large language model (LLM). Each anchor node $\mathcal{A}_i$ serves as a summary of the corresponding text chunk $\mathcal{C}_i$, forming a one-to-one mapping relationship: 1) ($\mathcal{A}_i$, $\mathcal{C}_i$) $\forall i \in \{1,2,\dots,n\}$. When an entity node $\mathcal{E}_i$ appears in a certain text chunk $\mathcal{C}_j$, the corresponding anchor node $\mathcal{A}_j$ establishes a connection with $\mathcal{E}_i$. Meanwhile, anchor nodes construct the \textbf{Contextual Skeleton Network} through ordered chain connections, with directed edges defined as: 2) ($\mathcal{A}_i$, $\mathcal{A}_{i+1}$) for $i = 1,2,3,\dots,n-1$ (for $n$ text chunks), forming a chain structure  $\mathcal{A}_1 \leftrightarrow \mathcal{A}_2 \leftrightarrow \dots \leftrightarrow \mathcal{A}_n$, as shown in \Cref{fig:kg-example}. This backbone network with chain connections of anchor nodes explicitly models the contextual dependencies among text chunks $\mathcal{C}$, preserving document context relationships in the graph and providing support for subsequent LLM traversals. The chunk-anchor mapping is designed to reduce computational costs during the Memorize-Recall process. This architecture allows \ours~to leverage contextual linkage relationships without requiring a chunk to be included in the enhanced answer subgraph. For ablation studies on our contextual linking mechanism, see \Cref{appendix::more-experiments}.

\subsection{Automated Knowledge Graph Building}
\label{appendix::kg-building}

Our knowledge graph extraction pipeline follows the conventional workflow: (1) Text Chunking, (2) Named Entity Recognition, (3) Relation Triple Extraction, and (4) Knowledge Graph Building. The examples of our KG are illustrated in Figure \ref{fig:webui-case-study}.

\textbf{Text Chunking.} Our implementation supports multiple text chunking methods, including the basic token-count-based chunking, meta chunking \cite{Chunk3}, and LLM-based chunking \cite{agentic-chunking}. For consistent evaluation, we uniformly employ the basic token-count-based chunking.

\textbf{Named Entity Recognition.} We use the prompt shown in \Cref{tab:entity-extract-prompt} to extract entities from each text chunk.

\textbf{Relation Triple Extraction.} We apply the prompt presented in \Cref{tab:relation-extract-prompt} to extract relations from text chunks after entity extraction.

\textbf{Knowledge Graph Building.} We create a chunk node and an anchor node for each text chunk. For each entity, we establish a separate node. When the embedding similarity between two entities exceeds our predefined threshold \textbf{(0.7)}, we consider them synonyms and merge them. Each chunk connects to its corresponding anchor. Anchor nodes of text chunks from the same paragraph connect via the \textbf{Contextual Skeleton Network}. Entities link to the chunks where they occur. Related entities connect through relation edges.

\section{LLM-Guided Knowledge Graph Traversal with Node Exploration, Node Exploitation and Memory Replay}
\label{appendix:llm-traversal}

\textbf{Overview}. Our entire \textbf{LLM-Guided Knowledge Graph Traversal }can be divided into the following steps: (1) Pre-query question processing; (2) Fast subgraph extraction using Memory Replay; (3) LLM-Guided KG traversal (optional, only if the subgraph obtained in step 2 cannot answer the question); (4) Memorizing LLM-Guided KG traversal (optional, only if step 3 was executed); (5) Generating the answer using the obtained results. As depicted in \Cref{fig:KG-Traversal-graph}.

\subsection{Pre-query Question Processing}
When a user inputs a request $\boldsymbol{I}$, our model first queries the large language model (LLM) to determine whether external knowledge in the knowledge graph (KG) is required to respond. (Determine whether it is necessary to conduct a search process). If affirmative, the model decomposes the input into several subqueries $\boldsymbol{Q}=[q_1, q_2, \dots,q_n]=\mathrm{SplitQuery}\big(\boldsymbol{I}\big)$. For consistency with baseline comparisons, the query decomposition was disabled during experiments shown in \Cref{sec:evaluation}.

\subsection{Fast subgraph extraction using Memory Replay}
\begin{figure}
    \centering
    \includegraphics[width=\textwidth, height=0.9\textheight, keepaspectratio]{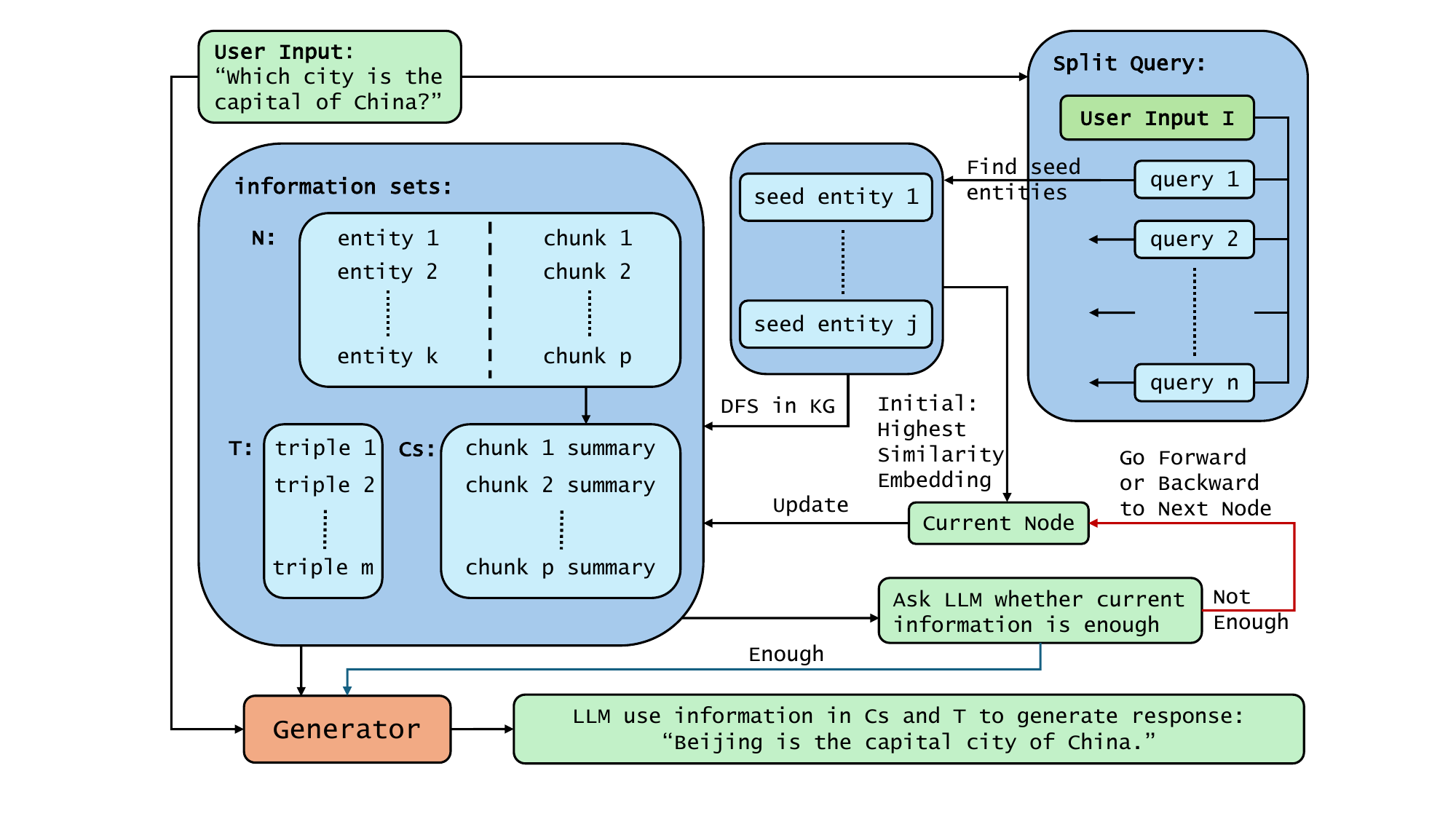}
    \caption{LLM-Based Knowledge Graph Traversal Process}
    \label{fig:KG-Traversal-graph}
\end{figure}

Then every subquery is matched with the KG by embedding similarity to identify the top-$K$ seed entities. Starting from these seed entities, we perform a thresholded DFS-based subgraph expansion (as shown in \Cref{sec::memory-replay}) governed by  edge embedding $w_{A,B}$ and a hyperparameter threshold $\lambda$. This traversal initializes three information sets:

\begin{enumerate}[label={},leftmargin=*,nosep]
    \item $\boldsymbol{N}=[e_1,e_2,\dots,e_k,c_1,c_2,\dots,c_p]$: Entities and chunks in Subgraph.
    \item $\boldsymbol{Cs}=[s_1,s_2,\dots,s_p]$: Semantic summaries of the chunks in $\boldsymbol{N}$.
    \item $\boldsymbol{T}=[t_1,t_2,\dots,t_m]$: Triples (edges) in Subgraph.
\end{enumerate}

\subsection{LLM-Guided KG Traversal}

The seed entity with the highest similarity is designated as the initial $node_{\text{c}}$. We iteratively ask the LLM to assess whether the current information sets $\boldsymbol{Cs}$, $\boldsymbol{T}$ contain sufficient information. If inadequate, the LLM is prompted either as \Cref{tab:node-selection-prompt}:

\begin{enumerate}[label={},leftmargin=*,nosep]
    \item \textbf{Forward}: to select a new node $node_{\text{next}}$ from linked nodes of $node_{\text{c}}$. $node_{\text{next}}$ is added to $\boldsymbol{N}$. If it is a chunk, the summary is processed and added to $\boldsymbol{Cs}$. The corresponding knowledge triple connecting $node_{\text{c}}$ to $node_{\text{next}}$ is added to $\boldsymbol{T}$. 
    \item \textbf{Backward}: to select an encountered node in $\boldsymbol{N}$. No previously collected elements in $\boldsymbol{N}$, $\boldsymbol{Cs}$, or $\boldsymbol{T}$ are removed.
\end{enumerate}

This step essentially combines the "Node Exploration" and "Node Exploitation" from \Cref{sec::llm-guided-traversal} into a single LLM call to reduce costs. When the LLM selects "forward," it essentially treats the most answer-relevant node as $ node_{\text{c}} $ and then chooses the optimal expansion node. When the LLM selects "backward," it updates the current $ node_{\text{c}} $ to a new most answer-relevant node.

\subsection{Memorizing LLM-Guided KG Traversal}
After the query concludes, we use \Cref{tab:kg-filtering-prompt} to select the nodes or edges in the answer subgraph that contribute to the answer. We then perform a DFS search to enhance the paths leading to these nodes or edges while penalizing other paths. The detailed process is shown in \Cref{sec::traversal-memorizing}.

\subsection{Generating the Answer Using the Obtained Results}
Finally, once the information is deemed sufficient, the LLM generates a response to the user query based on the information in $\boldsymbol{Cs}$ and $\boldsymbol{T}$.

Our framework utilizes an ensemble of prompt instructions. Due to space constraints, we focus specifically on illustrating the fundamental 'node selection'(LLM-Guided KG Traversal) prompt (\Cref{tab:node-selection-prompt}), with the complete set of supplementary prompts available in our open-source code repository.
\section{Experimental Dataset}
\label{appendix:dataset-modification}
The QA pairs in the Free-response Question format of the original LooGLE dataset suffer from substantial semantic coverage bias issues. Specifically, when a generated answer contains more comprehensive information than the reference answer, the GPT-4 discriminator in LooGLE (which relies on semantic equivalence) frequently marks the question as incorrect, because the real answers in LooGLE rarely achieve complete semantic coverage of the generated answers—even when the generated answers are substantively correct. Therefore, we have reformatted these questions into a multiple-choice pattern using LLMs: first, the original question is provided to the RAG system, and based on its returned content and the reformatted multiple-choice questions, GPT-4o is used for selection. This approach avoids information leakage while ensuring the objectivity of answer evaluation (see \Cref{appendix::dataset-tranformation-methodology} for the dataset reformatting methodology).
Some dataset samples can be found in \Cref{appendix::multi-choice-loogle-dataset}. Meanwhile, the rewriting process for the "Similar" and "Different" fields in \Cref{sec::evaluation-understanding1} is also demonstrated in \Cref{appendix::similar-question-rewrite} and \Cref{appendix::different-question-rewrite}.

\subsection{Dataset Transformation Methodology}
\label{appendix::dataset-tranformation-methodology}

The standardized experimental pipeline employed in this study is illustrated in \Cref{fig:loogle-test-pipeline}. For each question in the LooGLE dataset \cite{Benchmark1}, we process the original data through the following procedure: The "question", "answer", and "evidence" fields from the raw data are provided as input to the GPT-4o \cite{gpt-4o} model, which is instructed to generate a multiple-choice question with four options. The correct option is derived from the given answer, while three plausible distractors are constructed based on the "evidence" field. The specific prompt template for question transformation can be found in \Cref{tab:question-transformation-prompt}.

\begin{figure}
    \centering
    \includegraphics[width=0.5\textwidth]{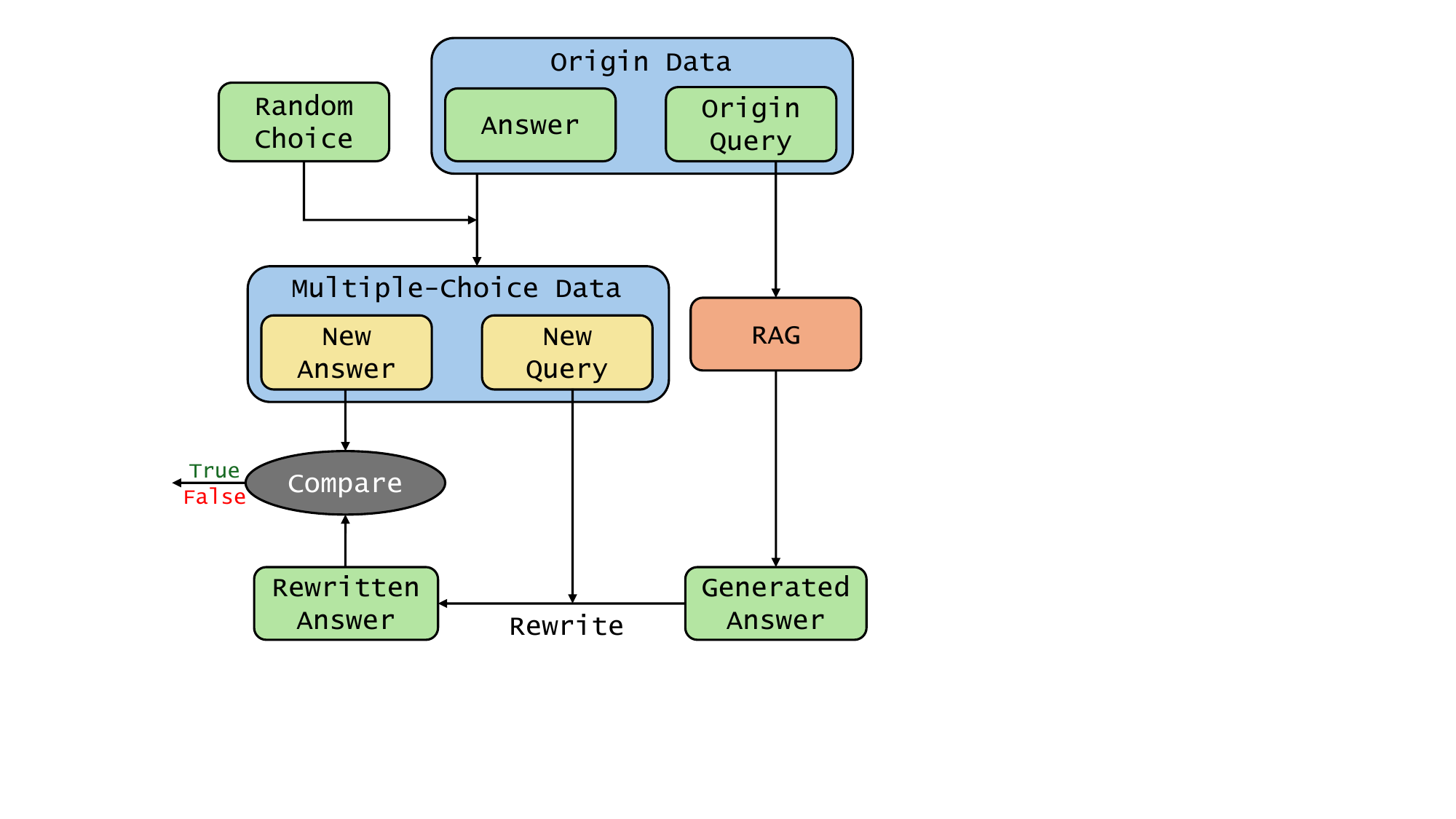}
    \caption{LooGLE Test Pipeline}
    \label{fig:loogle-test-pipeline}
\end{figure}

To prevent potential information leakage and ensure fair evaluation, we implement a two-stage processing strategy: First, the original question is fed into the retrieval system; subsequently, GPT-4o selects the most appropriate answer from the generated options based on the retrieval results (with the option to abstain if confidence is insufficient). The detailed prompt template for option selection is provided in \Cref{tab:option-selection-prompt}. This design maximizes the objectivity and scientific rigor of the evaluation process. Some dataset samples are illustrated in \Cref{appendix::multi-choice-loogle-dataset}.

\subsection{Multiple-Choice Format of the LooGLE Dataset}
\label{appendix::multi-choice-loogle-dataset}

This section presents several examples from the multiple-choice version of the LooGLE dataset for reference. As shown in \Cref{fig:dataset-sample-1}, for the original question, both \ours~ and other baseline methods face challenges in evaluation because all four options are presented in a (number-number) pair format. Consequently, when verifying answers, LLMs struggle to determine whether the model's response refers to an option or the actual answer.

Additionally, as illustrated in \Cref{fig:dataset-sample-2}, we observe that nearly all methods incorporate the evidence "but is available second-hand." into their responses. However, during answer verification, the model lacks awareness of whether this information is factually correct. Since the LooGLE reference answer does not include this detail, such responses are incorrectly judged. These cases are not uncommon, which motivated our modifications to the dataset.

\begin{figure}[htbp]
    \centering
    \begin{subfigure}[b]{0.45\linewidth}
        \includegraphics[width=\linewidth]{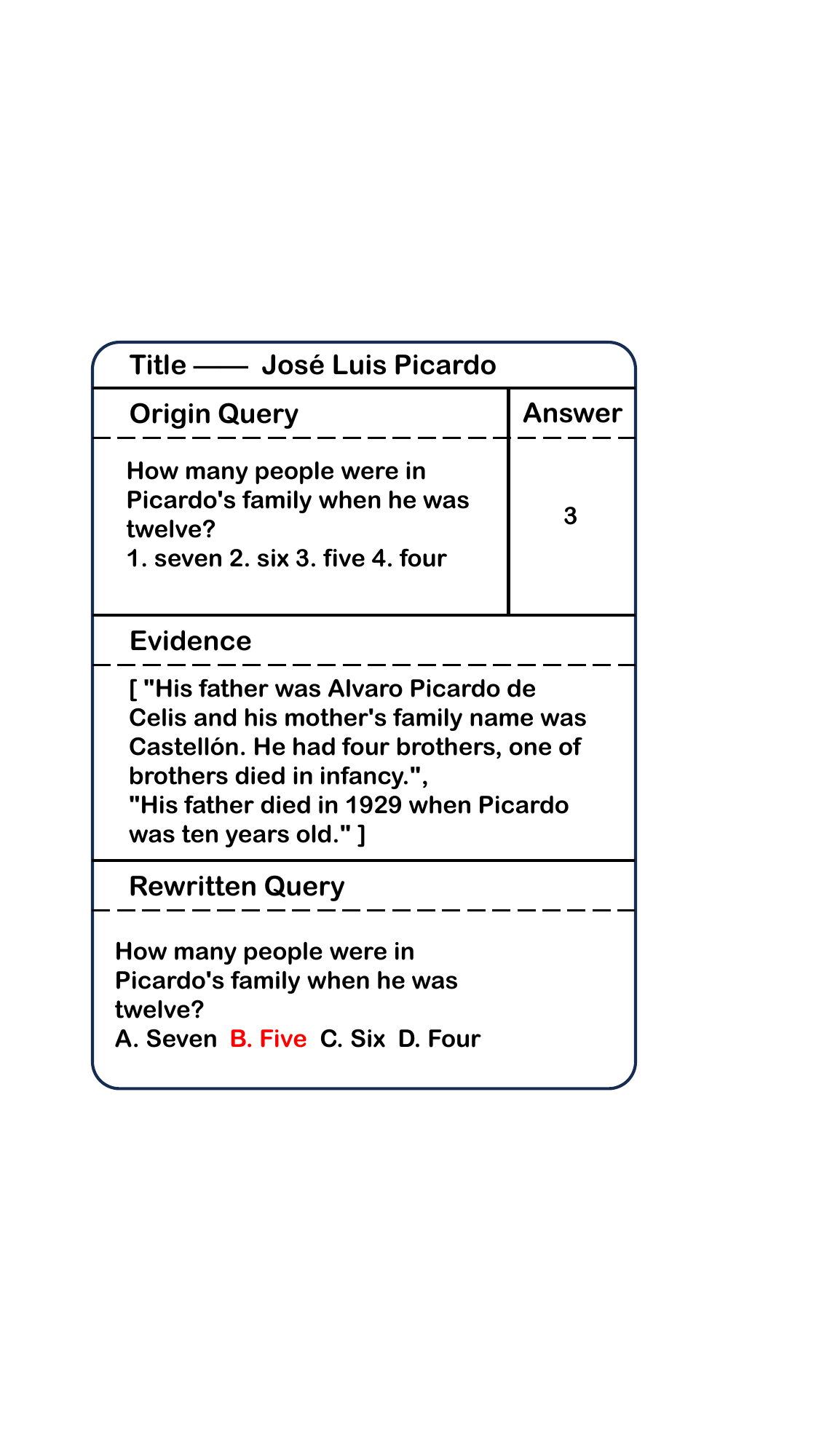}
        \caption{Ambiguous Answer Reference}
        \label{fig:dataset-sample-1}
    \end{subfigure}
    \hfill
    \begin{subfigure}[b]{0.45\linewidth}
        \includegraphics[width=\linewidth]{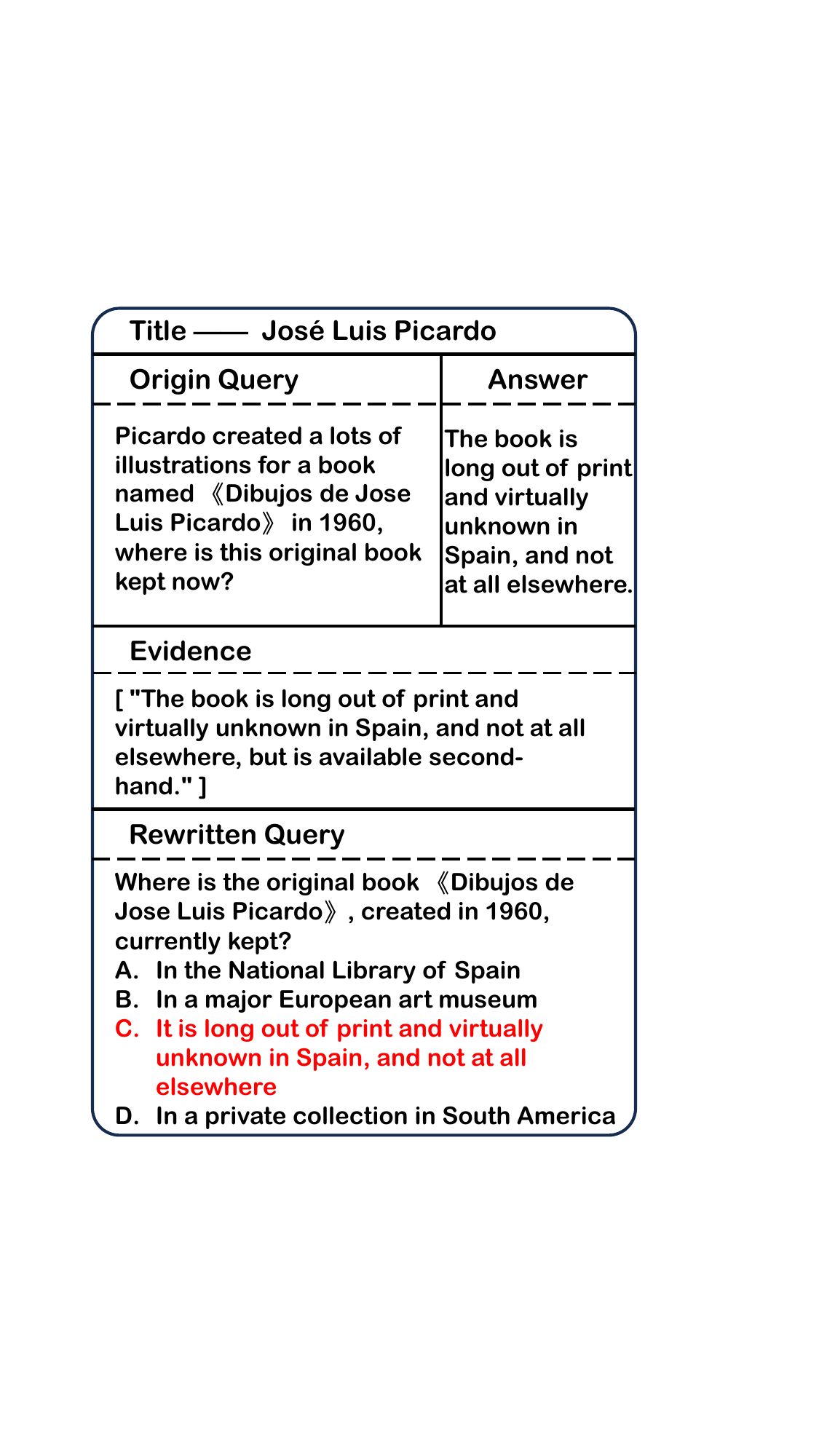}
        \caption{Insufficient Answer Information}
        \label{fig:dataset-sample-2}
    \end{subfigure}
    \caption{LooGLE Dataset Samples}
    \label{fig:dataset-samples}
\end{figure}

\subsection{Similar Question Rewriting Details}
\label{appendix::similar-question-rewrite}

For questions in "Long Dependency QA" and "Hotpot QA", we employed the prompt templates from \Cref{tab:similar-rewrite-prompt} to rewrite them. While preserving the essence of each question, we aimed to maximize the surface-level differences between the rewritten and original versions. A concrete example of such rewriting is provided in \Cref{fig:similar-question-example}.

\begin{figure}
    \centering
    \includegraphics[width=0.5\textwidth]{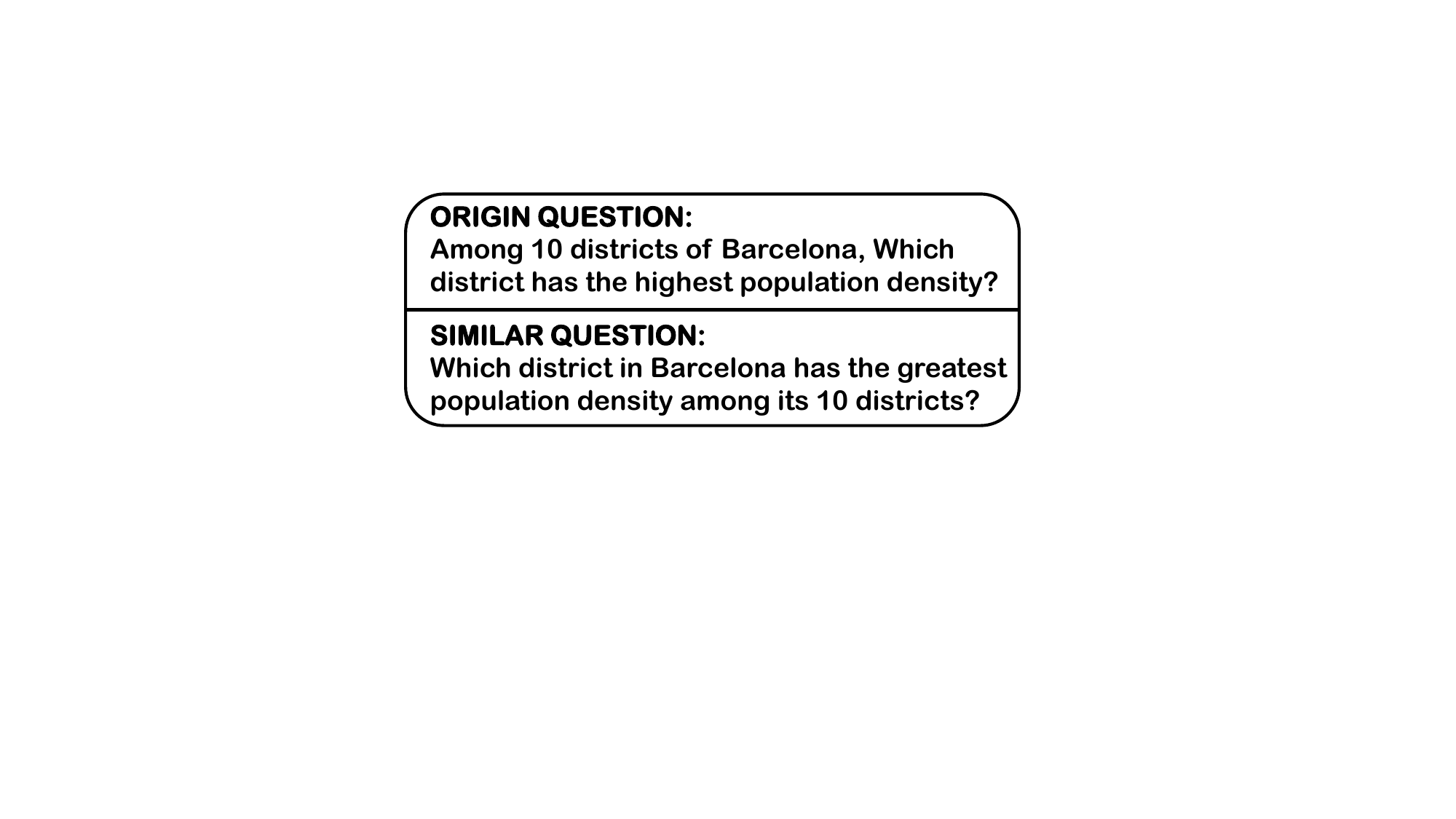}
    \caption{An Example of Similar Question in LooGLE}
    \label{fig:similar-question-example}
\end{figure}

\begin{figure}
    \centering
    \includegraphics[width=0.63\textwidth]{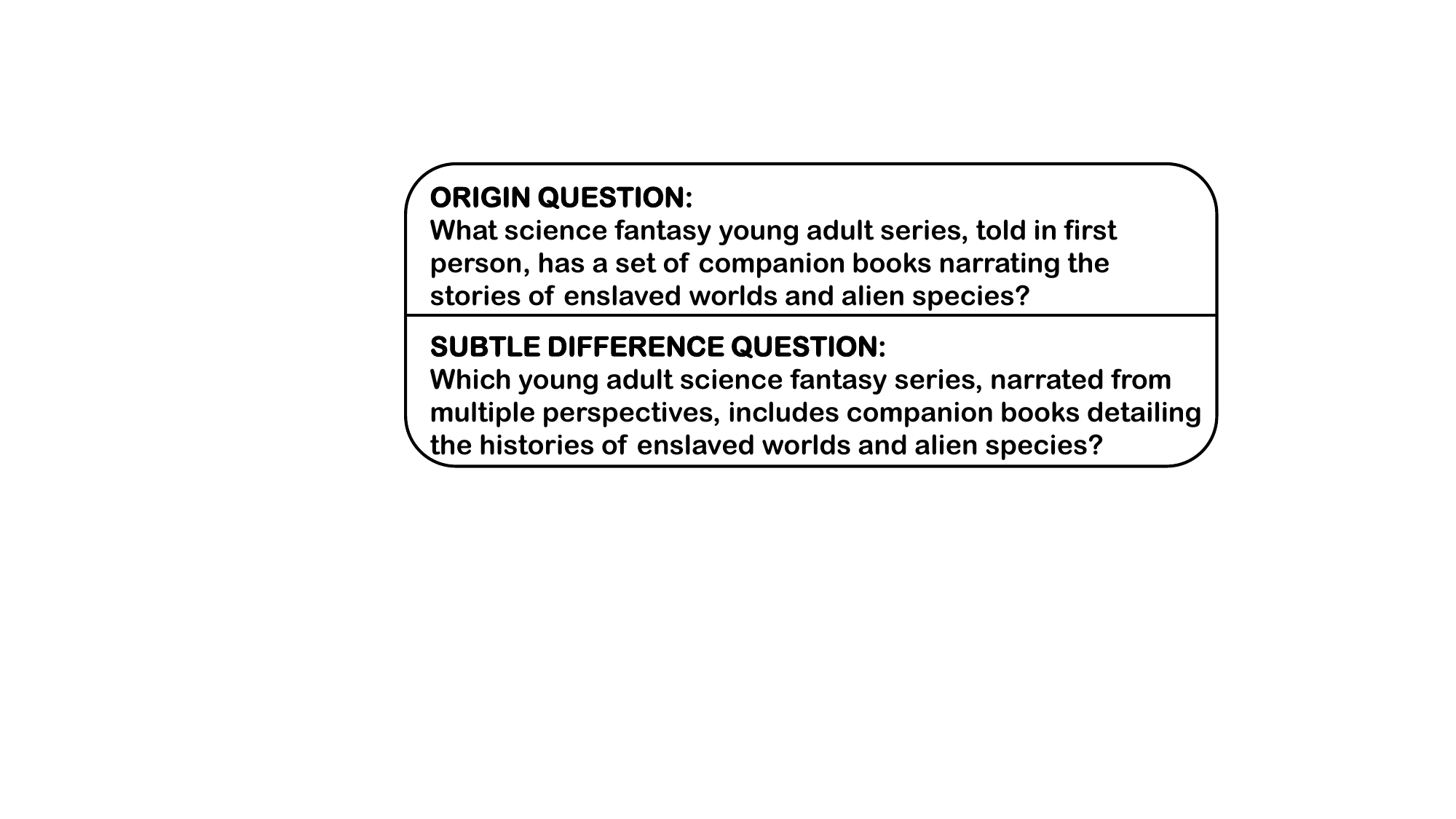}
    \caption{An Example of Subtle Difference Question in Hotpot QA}
    \label{fig:different-question-example}
\end{figure}

\subsection{Subtle Differences Question Rewriting Details}
\label{appendix::different-question-rewrite}
For questions in "Hotpot QA", we used the prompt templates from \Cref{tab:different-rewrite-prompt} to rewrite them. While maintaining as much semantic similarity as possible, we altered the questions so that their answers differed from the original versions. A specific rewriting example is shown in \Cref{fig:different-question-example}.

\section{Implementation Details}
\label{appendix:implementation-details}

\subsection{Experimental Configuration}
\label{appendix:experiment-config}
To ensure the reproducibility of our experiments, all tests were conducted with consistent parameter configurations. The random seed for the large language model was fixed to \textbf{123}, and the temperature parameter was set to \textbf{0} to eliminate randomness in the generation process. Additionally, GPT-4o is employed as our LLM-based evaluator. Furthermore, all dense embedding computations employed the "\textbf{nomic-ai/nomic-embed-text-v2-moe}" model \cite{Chunk4} as the foundational embedding model. For tokenization operations, we uniformly adopted the "\textbf{nomic-ai/nomic-embed-text-v2-moe}" as tokenizer, with all token-based chunks standardized to \textbf{750} tokens in length. For detailed parameter settings of other experiments, please refer to \Cref{appendix:parameter-configuration}.

\subsection{Compute Resources}
\label{appendix:compute-resources}
In our experiments, we employed \textbf{GPT-4o-mini} \cite{gpt-4o-mini} and \textbf{Deepseek-V3} \cite{deepseek-v3} as the underlying large language models (LLMs). (All operations utilizing large language models (GPT-4o, GPT-4o-mini, Deepseek-V3) are completed by invoking APIs). Notably, the proposed experiments can be readily replicated on ordinary machines, ensuring broad accessibility and reproducibility for the research community. Experiments were executed on a dedicated research workstation configured with an AMD Ryzen 7 7800X3D 8-Core Processor, an NVIDIA GeForce RTX 4070 Ti SUPER GPU, and \(64\) GB of DDR5 RAM. 

\subsection{Evaluation Metrics}
Our efficiency is quantified in our paper through token consumption. Our decision not to use and can not use precise wall-clock time latency as a direct metric stems from two key reasons: (1) the strong correlation between LLM inference time and the number of output tokens, and (2) the inherent uncertainties and fluctuations caused by network latency during API calls to LLMs, which would introduce inaccuracies into direct time measurements.

\subsection{Parameter Configuration}
\label{appendix:parameter-configuration}

The system involves the following key parameters:

\begin{itemize}[leftmargin = 15pt]
    \item \textbf{Node Correlation Weight ($\alpha$)}: 
    This parameter adjusts the system's reliance on edge embedding for determining strong links. Experimental validation suggests setting $\alpha$ within the range of 0.1--0.2. In this study, we adopt $\alpha = \bm{0.1}$.

    \item \textbf{Strong Connection Threshold ($\lambda$)}: 
    As discussed in \Cref{sec:theoretical-analysis}, we typically set it to a value below 0.775, but this is only under theoretical conditions. In practice, various factors should be comprehensively considered—setting it too low may lead to increased retrieval costs due to a lower memory threshold outweighing the noise reduction, while setting it too high may result in reduced memory capacity. Through empirical testing, we consider values between 0.5 and 0.75 to be reasonable. In the experiments of this paper, we select this value as \textbf{0.55}.
\end{itemize}

Other critical parameters include:

\begin{itemize}[leftmargin = 15pt]
    \item \textbf{Synonym Similarity Threshold}: 
    Used during knowledge graph construction to merge entities whose embedding similarity exceeds this threshold. This value should be adjusted according to the specific embedding model characteristics. Our experiments employ \textbf{0.7} as the default value.

    \item \textbf{Maximum Hop Count}: 
    Controls the maximum number of nodes (hops) during subgraph expansion for word queries. We set this parameter to \textbf{10} to balance computational efficiency and information completeness. For other baselines with similar configurations, we uniformly apply the same setting for fair comparison.

    \item \textbf{Question Decomposition Limit}: 
    Determines the maximum number of sub-questions for semantic decomposition of user queries. To maintain comparability with baseline methods, we set this to \textbf{1} (i.e., preserving the original question without decomposition).

    \item \textbf{Initial Seed Node Count}: 
    Specifies the number of seed nodes in the query initialization phase. We configure this as \textbf{2} seed nodes (selecting the two most query-relevant nodes), consistent with relevant baseline studies.
\end{itemize}

\subsection{Implementation of Baselines}
\textbf{Traditional retrieval methods:}
\begin{itemize}[leftmargin=*]
\item \textbf{BM25} \cite{BM25}. A classic information retrieval algorithm improving on TF-IDF \cite{TF-IDF}, it estimates document-query relevance via a probabilistic framework. It effectively handles long documents and short queries by integrating term frequency (TF), inverse document frequency (IDF), and document length normalization to calculate relevance scores, enhancing retrieval accuracy through adaptive term weighting. Following its principles, we have realized the algorithm.

\item \textbf{NaiveRAG} \cite{Base2}. The NaiveRAG paradigm, an early "Retrieve-Read" framework, involves indexing multi-format data into vector-encoded chunks stored in a database, retrieving top-K similar chunks via query vector similarity, and generating responses by prompting large language models with queries and selected chunks. We have developed the corresponding code based on its principles.

\end{itemize}

\textbf{Traditional KG-RAG systems}:
\begin{itemize}[leftmargin=*]
\item \textbf{HippoRAG2} \cite{Main6}. This framework enhances retrieval-augmented generation (RAG) to mimic human long-term memory, addressing factual memory, sense-making, and associativity. It uses offline indexing to build a knowledge graph (KG) with triples extracted by an LLM, integrating phrase nodes (concepts) and passage nodes (contexts) via synonym detection. Online retrieval links queries to KG triples using embeddings, filters irrelevant triples with an LLM (recognition memory), and applies Personalized PageRank (PPR) on the KG for context-aware passage retrieval. Improvements include dense-sparse integration, deeper query-to-triple contextualization, and effective triple filtering.
\item \textbf{LightRAG} \cite{Main2}. This method addresses limitations of existing RAG systems (e.g., flat data representations, insufficient contextual awareness) by integrating graph structures into text indexing and retrieval. It employs a dual-level retrieval paradigm: low-level retrieval for specific entities/relationships and high-level retrieval for broader themes, combining graph structures with vector representations to enable efficient keyword matching and capture high-order relatedness. The framework first segments documents, uses LLMs to extract entities/relationships, and constructs a knowledge graph via LLM profiling and deduplication for efficient indexing. 
\item \textbf{GraphRAG} \cite{Main3}. This approach aims to enable large language models (LLMs) to perform query-focused summarization (QFS) over large private text corpora, especially for global sensemaking queries that require understanding of the entire dataset. It constructs a knowledge graph in two stages: first, an LLM extracts entities, relationships, and claims from source documents to form a graph where nodes represent key entities and edges represent their relationships. Second, community detection algorithms partition the graph into a hierarchical structure of closely related entity communities. The LLM then generates bottom-up community summaries, with higher-level summaries recursively incorporating lower-level ones. Given a query, GraphRAG uses a map-reduce process: community summaries generate partial answers in parallel (map step), which are then combined and summarized into a final global answer (reduce step). 
\end{itemize}
For these three baselines, we use their default hyperparameters (except for chunk size, which is uniformly set to 750, consistent with \ours) and prompts.

\textbf{LLM-guided KG-RAG system}:
\begin{itemize}[leftmargin=*]
\item \textbf{Plan-on-Graph (PoG)} \cite{KG1}. This paradigm aims to address the limitations of existing KG-augmented LLMs, such as fixed exploration breadth, irreversible paths, and forgotten conditions. To achieve this, it decomposes the question into sub-objectives and repeats the process of adaptive reasoning path exploration, memory updating, and reflection on self-correcting erroneous paths until the answer is obtained. PoG designs three key mechanisms: 1) Guidance: Decomposes questions into sub-objectives with conditions to guide flexible exploration. 2) Memory: Records subgraphs, reasoning paths, and sub-objective statuses to provide historical information for reflection. 3) Reflection: Employs LLMs to decide whether to self-correct paths and which entities to backtrack to based on memory.We adapt their code from their GitHub repository\footnote{\url{https://github.com/liyichen-cly/PoG}} to use the KG constructed by \ours~(More details can be seen at \Cref{appendix:::some-details-of-PoG}).

\end{itemize}

For all methods described above, the "nomic-ai/nomic-embed-text-v2-moe" model is uniformly used as the embedding model. Methods incorporating knowledge graph memory functionality utilize ChromaDB for knowledge graph storage.

\subsection{Implementation of LLM-as-judgment}
\label{appendix::answer-evaluation}

Followed previous work \cite{Benchmark1},We employ the prompt shown in \Cref{tab:check-ans-prompt} to conduct LLM-as-judge (GPT-4o) based answer evaluation.

\subsection{Additional Experiments on More Baseline Models}
\label{appendix::More Baseline Models}
For better understanding, we consider an additional baselines MemoRAG \cite{Main11}. This framework enhances RAG to tackle long-context processing, overcoming conventional RAG’s reliance on explicit queries and well-structured knowledge. Inspired by human cognition, it uses a dual-system architecture: a light long-range system builds global memory (via configurable KV compression) optimized by RLGF (rewarding clues that boost answer quality), and an expressive system generates final answers. For tasks, the light system produces draft clues to guide retrieving relevant evidence from long contexts, then the expressive system outputs the final answer. It outperforms advanced RAG methods (e.g., GraphRAG \cite{Main3}) and long-context LLMs in both QA and non-QA tasks, with balanced inference efficiency.

\textbf{Experimental Setup}. Given MemoRAG \cite{Main11}'s capability to manage memory for ultra-long texts, we focused our evaluation on the \textit{LongDependencyQA} task of the LooGLE dataset \cite{Benchmark1} in this experiment; for the baseline configuration, the model employed is \textit{gpt-4o-mini}, the embedding\textunderscore model is uniformly set to \textit{nomic-ai/nomic-embed-text-v2-moe} (consistent with our experimental setup), and all other parameters adhere to MemoRAG's default settings—including the mem\textunderscore mode parameter, which adopts its default value of \textit{memorag-mistral-7b-inst}. The pre-trained weights of \textit{memorag-mistral-7b-inst} were obtained from the Hugging Face Hub, and we verified the model's license agreement (Apache 2.0) on its official Hugging Face Model Card to ensure compliance with research usage requirements, avoiding potential issues related to intellectual property or usage restrictions.

\begin{table}[htbp]
  \centering
  \captionsetup{skip=10pt}           
  \begin{tabular}{l|l|c}
    \hline
    \hline
    \textbf{Methods } & \textbf{Baseline Model} & \textbf{Long-Dependency-Task (\%)} \\
    \hline
    MemoRAG        & gpt-4o-mini             & 46.97                              \\
    HippoRAG2      & gpt-4o-mini             & 39.60                              \\
    REMINDRAG      & gpt-4o-mini             & 57.04                              \\
    \hline
    \hline
  \end{tabular}
  \caption{Performance on Long-Dependency-Task (Accuracy $\uparrow$).}
  \label{tab:more-baseline-experiment}
\end{table}

\textbf{Experimental Results}. As shown in Table \ref{tab:more-baseline-experiment}, MemoRAG indeed demonstrates strong performance in long-dependency tasks: its 46.97\% accuracy on the Long-Dependency-Task outperforms HippoRAG2's 39.60\% by 7.37 percentage points, a result supported by its dual-system architecture that efficiently builds global memory and optimizes clue selection via RLGF. However, it still lags significantly behind our \ours~ method—our approach achieves 57.04\% accuracy, a 10.07-percentage-point lead over MemoRAG, as MemoRAG's KV compression may lose fine-grained long-range information, while \ours~ better captures dynamic, long-distance associations critical to the task.
\section{Additional Experiment Analysis}
\label{appendix:additional-analysis}

\subsection{Additional Analysis on \ours}
\label{appendix::more-experiments}

\begin{wraptable}{r}{0.6\textwidth}
    \centering
    \begin{tabular}{l|c|c|c}
        \toprule
        Methods & Long & Multi-Hop & Simple \\
        \midrule
        \midrule
        \ours & 57.04\% & 74.22\% & 76.67\% \\
        \quad-w/o CS & 51.01\% & 74.22\% & 73.07\% \\
        \midrule
        HippoRag2 \cite{Main6} & 38.06\% & 68.04\% & 73.08\% \\
        \bottomrule
    \end{tabular}
    \caption{Ablation Study (Accuracy $\uparrow$). "CS" denotes the Contextual Skeleton Network in the Knowledge Graph. Using "GPT-4o-mini" as the backbone.}
    \label{tab:ablation-study}
\end{wraptable}

\textbf{Contextual Information Capture Capability.} We designed ablation experiments to investigate the impact of Contextual Skeleton Network in \ours's knowledge graph variants on performance, with results shown in \Cref{tab:ablation-study}. The experimental results demonstrate that removing the Contextual Skeleton Network leads to a slight performance degradation in \ours~ (no effect was observed in Multi-Hop question scenarios since the input data inherently consists of discrete single documents without multi-chunk cases within individual documents). Through in-depth analysis, we identify that this performance decline may stem from the following mechanism: The token-based chunking strategy may forcibly segment semantically coherent text, thereby compromising textual integrity. For instance, when the referent entity in a text fragment appears as a pronoun, the knowledge graph typically fails to establish direct associations between this entity and the text fragment. In contrast, large language models (LLMs) can actively access relevant contextual information through the chain connections of the Contextual Skeleton Network, thereby compensating for this deficiency.

\begin{figure}[htbp]
    \centering
    \begin{subfigure}[b]{0.48\textwidth}
        \includegraphics[width=\linewidth]{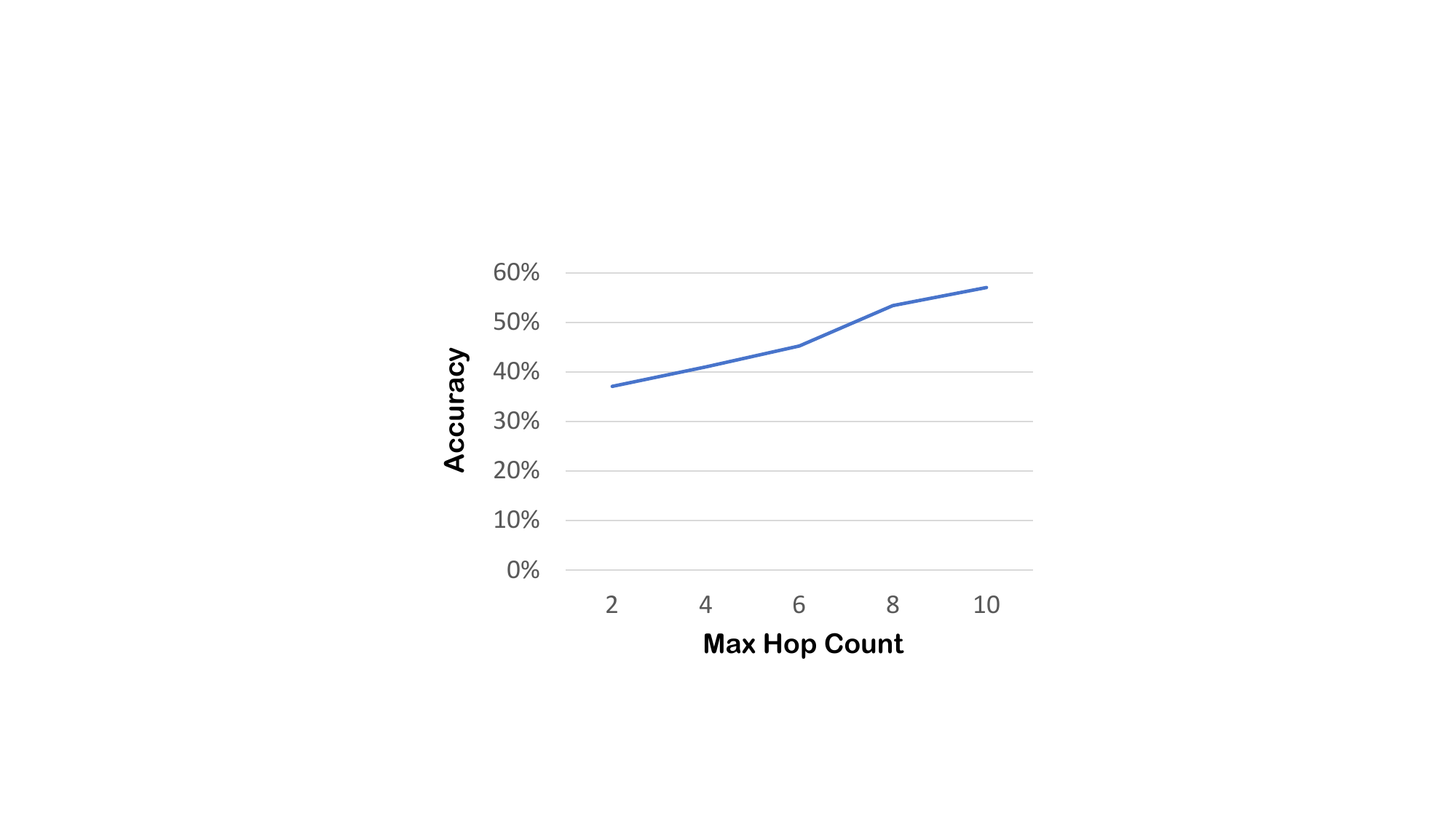}
        \caption{Performance on Long Dependency QA.}
    \end{subfigure}
    \hfill
    \begin{subfigure}[b]{0.48\textwidth}
        \includegraphics[width=\linewidth]{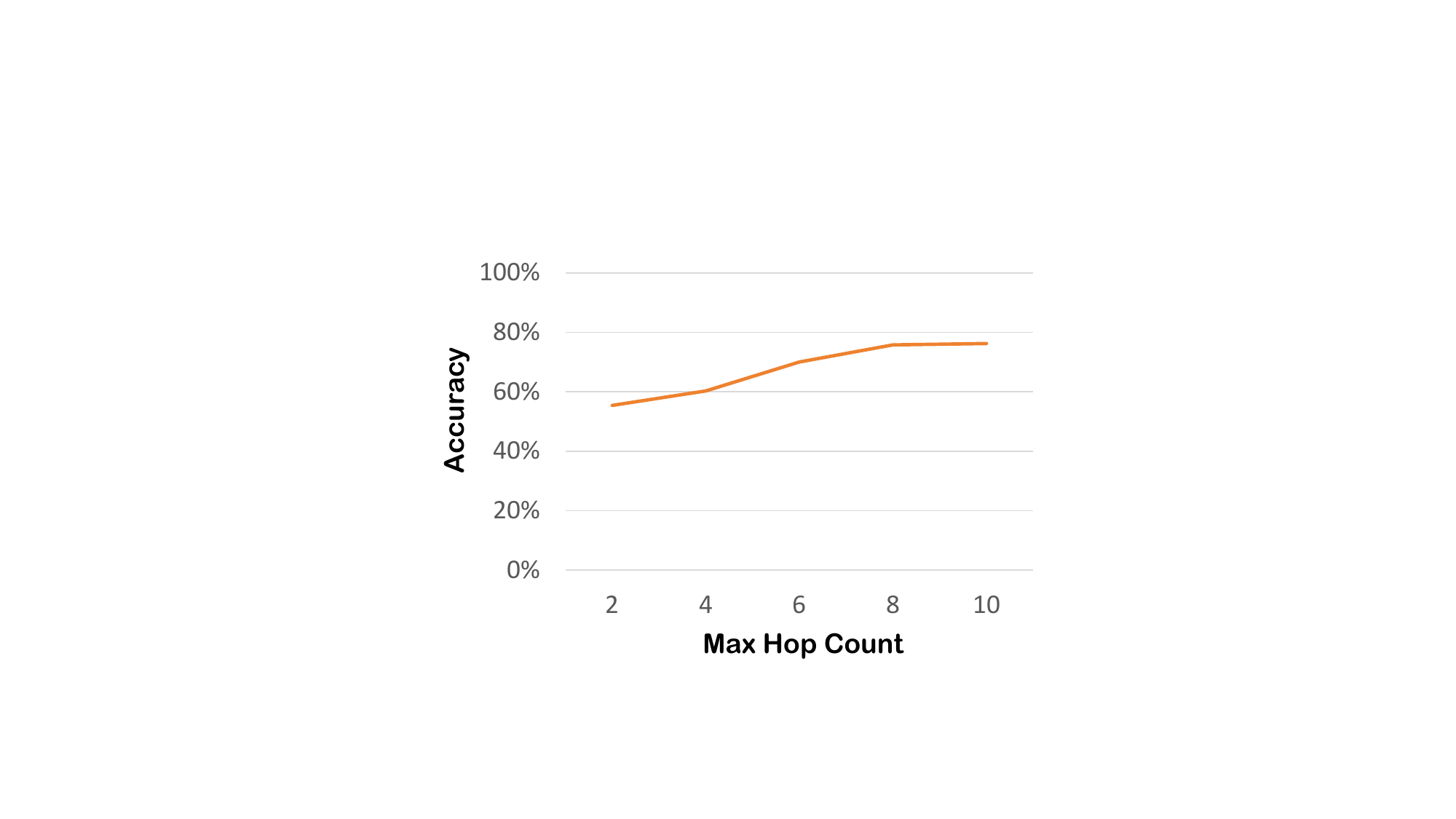}
        \caption{Performance on Multi-Hop QA.}
    \end{subfigure}
    \caption{The Impact of Different Hop Count Settings.}
    \label{fig:hop-count}
\end{figure}

\begin{wraptable}{r}{0.7\textwidth}
    \begin{tabular}{l|c|c|c}
        \toprule
        Methods & Avg. API Calls & Token Ratio & Total Tokens \\
        \midrule
        \midrule
        \ours & 2 & 2.64 & 1.25M\\
        \midrule
        GraphRAG \cite{Main3} & 14.73 & 15.42 & 7.31M\\
        HippoRag2 \cite{Main6} & 2 & 3.91 & 1.87M\\
        \bottomrule
    \end{tabular}
    \caption{Graph Building Comparative Experiment.}
    \label{fig:graph-build-compare}
\end{wraptable}

\textbf{The Impact of Different Hop Count Settings.} We investigate the influence of various hop count configurations on system performance. Specifically, through experiments, we analyze the effect of the Maximum Hop Count parameter in the \ours~ model (the relevant parameter definitions are detailed in \Cref{appendix:parameter-configuration}). As illustrated in \Cref{fig:hop-count}, the experimental results demonstrate a significant positive correlation between model performance and hop count settings: as the Maximum Hop Count value increases, the classification accuracy of the system exhibits a monotonically increasing trend. This finding aligns with the fundamental principle of the LLM-Guided Knowledge Graph Traversal in \ours, where a larger hop count enables nodes to access broader neighborhood information, thereby enhancing their information-gathering capability.

\textbf{Cost Analysis of Graph Building.} We compare the graph building costs between \ours~ and other baselines. As shown in \Cref{fig:graph-build-compare}, experimental results on the LooGLE dataset demonstrate that our method achieves a significantly lower ratio of consumed tokens to source text tokens during the construction process compared to high-overhead systems like GraphRAG\cite{Base3}, while maintaining competitive performance against other baseline approaches.

\textbf{Analysis on Path Consistency}. To demonstrate that semantically similar queries correspond to similar graph paths, we design an experiment to quantify the similarity between search paths of an original problem and a similar problem. We model the search paths as sets of \textit{edges}, where an "edge" refers to a direct connection between two nodes. Let \( E_1 \) represent the set of edges in the original problem's search path, and \( E_2 \) represent the set of edges in the similar problem's search path.

The similarity \( S \) between \( E_1 \) and \( E_2 \) is defined as the ratio of the number of identical edges (intersection of the two sets) to the total number of unique edges (union of the two sets after deduplication). Formally, this is expressed as:
\[
S = \frac{|E_1 \cap E_2|}{|E_1 \cup E_2|}
\]
Here, \( |E_1 \cap E_2| \) denotes the cardinality of the intersection (i.e., the count of edges shared by both search paths), and \( |E_1 \cup E_2| \) denotes the cardinality of the union (i.e., the total count of distinct edges present in either search path). A \( S \) value closer to \( 1 \) indicates a higher overlap between the two search paths (implying more similar underlying logic), while a value closer to \( 0 \) indicates greater divergence.

To obtain a comprehensive measure, we define the aggregated similarity \( S_{\text{avg}} \) as the average of all individual similarity values from multiple calculations:
\[S_{\text{avg}} = \frac{1}{n} \sum_{i=1}^{n} S_i\]
where \( n \) is the total number of similarity calculations performed, and \( S_i \) is the similarity from the \( i \)-th calculation. We first ran the original problems and then the corresponding similar problems. Taking the base model GPT-4o-mini as an example, the experimental results for similarity are presented in Table \ref{tab:similarity-result}. From these experimental results, it is evident that semantically similar queries induce similar graph paths.

\begin{table}[ht]
\captionsetup{skip=10pt} 
\centering
\begin{tabular}{l|c}
\toprule
Task               & \( S_{\text{avg}} \) \\
\midrule
LongDependencyQA   & 53.78\% \\
\bottomrule
\end{tabular}
\caption{Similarity results for the base model GPT-4o-mini.}
\label{tab:similarity-result}
\end{table}

\textbf{Query Order Impact}. To evaluate whether the order of questions (introduced via random shuffling) affects the system's performance, we designed an experiment where questions in the dataset were randomly shuffled \textit{before each system update}, with three accuracy measurements taken to assess the system's robustness to input order variations.

Specifically, we integrated a random question shuffling module into our codebase. The experiment proceeded in three stages: 
- For \textit{Update times = 0}: We first randomly shuffled the questions and measured the system's accuracy without any prior updates;
- For \textit{Update times = 1}: After the first system update, we re-randomly shuffled the questions (generating a new order distinct from the first) and measured accuracy;
- For \textit{Update times = 2}: After the second system update, we again re-randomly shuffled the questions (generating a third unique order) and measured accuracy.

For each stage, we also recorded the \textit{original accuracy} (i.e., accuracy with unshuffled questions) for comparison. The experimental results are summarized in Table \ref{tab:shuffling-impact}.

\begin{table}[ht]
\captionsetup{skip=10pt}
\centering
\begin{tabular}{l|c|c}
\toprule
Update times & Accuracy (With Re-shuffled Questions) & Original Accuracy (Unshuffled) \\
\midrule
0            & 56.37\%                               & 57.04\%                         \\
1            & 55.78\%                               & 56.48\%                         \\
2            & 57.71\%                               & 58.01\%                         \\
\bottomrule
\end{tabular}
\caption{System accuracy on Long-Dependency-Task with re-randomized question orders before each update (with original accuracy for comparison). A new shuffled order was used for each update stage. }
\label{tab:shuffling-impact}
\end{table}

From Table \ref{tab:shuffling-impact}, we observe that even with a fresh random shuffle of questions before each update, the system's accuracy remains consistently close to the original accuracy (with unshuffled questions) across all stages. Specifically, at \(\text{Update times} = 0\), the shuffled accuracy (\(56.37\%\)) is comparable to the original (\(57.04\%\)); after the first re-shuffle and update (\(\text{Update times} = 1\)), the gap remains minimal (\(55.78\%\) vs. \(56.48\%\)); and after the second re-shuffle and update (\(\text{Update times} = 2\)), the trend persists (\(57.71\%\) vs. \(58.01\%\)). These results confirm that repeated random shuffling of question order—even before each system update—has negligible impact on performance, demonstrating the system's strong robustness to input order variations.

\subsection{Additional Analysis on Baselines}
\label{appendix::supplementary-experimental-details}

\subsubsection{Analysis of PoG}
\label{appendix:::some-details-of-PoG}
In the field of KGQA (Knowledge Graph Question Answering), we select the latest method PoG (a self-correcting adaptive planning paradigm for KG-augmented LLMs) as a representative. Since the original method is exclusively designed for the Freebase knowledge graph \cite{freebase}, we modify the source code of PoG (the revised code is open-sourced on GitHub) to adapt it to our own knowledge graphs, allowing traversal of nodes in our constructed knowledge graphs. (As PoG only handles node-level processing, the knowledge graph provided to PoG excludes the chunk component.) We greatly appreciate their open-source contribution.

\begin{wraptable}{r}{0.6\textwidth} 
\centering
\begin{tabular}{l|c|c}
\toprule
\textbf{QA Type} & \textbf{GPT-4o-mini} & \textbf{DeepSeek-V3} \\ 
\midrule
Long Dependency QA & 30.29K & 25.88K \\
Multi-Hop QA       & 5.53K & 6.96K \\
Simple QA          & 27.11K & 21.21K \\
\bottomrule
\end{tabular}
\vspace{0.5em} 
\caption{Comparison of the Average Token Consumption of the PoG Method across Different QA Types}
\label{tab:token_consumption}
\end{wraptable}

We retain PoG's implementation methodology and prompts, only altering the output examples in parts of the prompts to adapt to the format; code differences can be viewed at \url{https://anonymous.4open.science/r/PoG_adapted-C5C0.} The accuracy rates show that the method of conducting Large Language Model (LLM) traversal search solely on the nodes of the knowledge graph, which lacks all textual information, significantly reduces the search accuracy. Compared with our method, PoG’s performance on each dataset drops significantly.

Additionally, DeepSeek-V3 exhibits a more pronounced inclination toward a conservative strategy, leading to a higher search depth and greater token consumption compared to GPT-4o-mini. It frequently outputs null answers due to perceived lack of sufficient information, which is also clearly evident in GraphRAG (\Cref{appendix:::abnormal-results-of-GraphRAG}).

Since the source code of PoG mentions that the maximum search depth is set to 4, in the case of a knowledge graph relying on long-text associations (the revised code we wrote sets the maximum depth to 10, which is consistent with our method), the performance is poor, and the consumption of tokens has significantly increased compared with ours, as shown in \Cref{tab:token_consumption}. However, for the simple multi-hop dataset of HotPotQA, which extracts information from Wikipedia and has extremely limited information volume, token consumption is more optimized than ours. We believe that such a simple question-answering dataset cannot well represent the real-world scenarios, and many questions still require complex information extraction. Therefore, the efficiency optimization of searching in complex datasets is of greater application value.
For other methods, since they do not involve LLM traversal, comparing token consumption is of little significance.

\subsubsection{Analysis of GraphRAG}
\label{appendix:::abnormal-results-of-GraphRAG}
Based on the feedback on similar issues from GitHub users, we hereby analyze the reasons for the relatively poor performance of GraphRAG\footnote{\url{https://github.com/microsoft/graphrag/issues/682}}.In order to aim for a low hallucination rate, GraphRAG often outputs responses like "More information is needed." For example, \Cref{fig:deepseek-problem} shows a question and its generated answer.

\begin{figure}[h]
    \centering
    \includegraphics[width=1\textwidth]{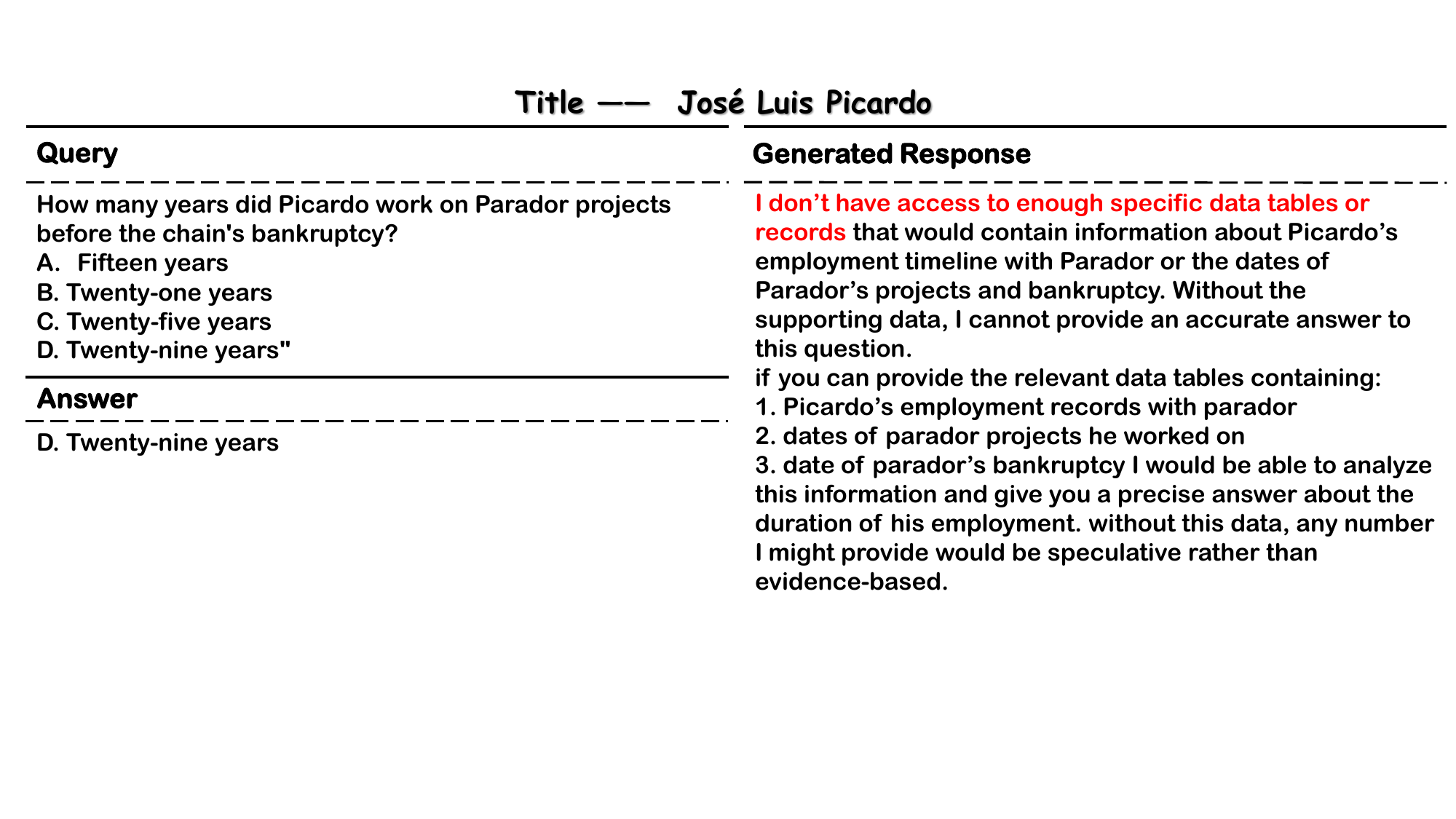}
    \caption{Comparative Illustration of Subtle Difference Questions in the Knowledge Graph.}
    \label{fig:deepseek-problem}
\end{figure}

Deepseek-V3 is more inclined to provide conservative outputs compared to GPT-4o-mini (often requesting more information from users instead of providing direct answers), so its accuracy rate has dropped significantly. We have also found that, unlike other methods, GraphRAG is more effective at retrieving for long-dependency question-answering tasks, while its retrieval ability for short-dependency question-answering tasks has declined.

\section{Theoretical Analysis}
\label{appendix:theoretical-analysis}

This section presents a theoretical analysis of the memory capacity of \ours. We demonstrate that, given a set of queries exhibiting some degree of semantic similarity, our memorization rules in \Cref{eq:update-function} enable edge embeddings to memorize these queries and store sufficient information, thereby enabling effective memory replay. Formally, for any given edge, when the maximum angle $\theta$ between all query embedding pairs within a query set satisfies \Cref{eq:lamba-theta2}, the edge can effectively memorize the semantic information within that query set. 
\begin{equation}
    \theta \leq \lim_{d \to \infty} \left[ 2\arcsin\left(\sqrt{\frac{1}{2}}\sin(\arccos(\lambda))\right) \right]
\label{eq:lamba-theta2}
\end{equation}

\begin{assumption}
For a typical path $e = A \rightarrow B$ in the knowledge graph where $e \in E$, we define that this path "remembers\footnote{We ignore the first term in \Cref{eq:dfs-step-in}, as it mainly serves to adjust the system's sensitivity to memorized data and is multiplied by a small coefficient, rendering its overall impact on the final result relatively minor.}" query $q$ when $q \cdot e > \lambda$.
We assume that the input query embedding $q$ is normalized, i.e., $\|q\| = 1$.
\end{assumption}

\begin{proposition}
For a set of query embeddings $Q\subseteq \mathbb{R}^n$, we have
\begin{align*}
&\forall q_1, q_2 \in Q, \angle(q_1, q_2) \leq 2\arcsin\left(\sqrt{\frac{1}{2}}\sin(\arccos(\lambda))\right) \implies \exists n \in \mathbb{N}, \exists (q_1, q_2, \dots, q_n) \in Q^n, \\
&\quad\quad\quad\quad \text{such that }\quad e_{\text{final}} = (f_{q_n} \circ \dots \circ f_{q_1})(\vec{0}) 
\text{ satisfies } \forall q \in Q,\quad e_{\text{final}} \cdot q > \lambda
\end{align*}
\end{proposition}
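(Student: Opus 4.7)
The plan is to reduce the conclusion to a geometric statement via Jung's theorem---producing a ``target'' unit vector $c^*$ with $c^*\cdot q$ strictly greater than $\lambda$ uniformly over $Q$---and then argue that the dynamics defined by the $f_q$'s can place the iterate close to $c^*$ with norm tending to $1$. To do this I would first convert the angle bound into a Euclidean diameter bound: since $\|q_1-q_2\|_2=2\sin(\angle(q_1,q_2)/2)$ for unit vectors, the hypothesis gives $\mathrm{diam}(Q)\leq\sqrt{2(1-\lambda^2)}$. Jung's theorem in $\mathbb{R}^n$, whose constant tends to $1/\sqrt{2}$ as $n\to\infty$ (the $d\to\infty$ limit absorbed by the excerpt), then places $Q$ inside a Euclidean ball of some center $c$ and radius $R\leq\sqrt{1-\lambda^2}$. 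Projecting $c$ to the unit sphere via $2\,q\cdot c = 1+\|c\|^2-\|q-c\|^2$ and optimizing $q\cdot(c/\|c\|)$ over $\|c\|$ (the maximum occurs at $\|c\|=\sqrt{1-R^2}$) yields a unit vector $c^*$ with $c^*\cdot q\geq\sqrt{1-R^2}\geq\lambda$ for every $q\in Q$, with a strict margin $\varepsilon>0$ whenever the Jung bound is not saturated.

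Next I would record the structural properties of $f_q(v)=v+\delta(v)\,q$ with $\delta(v)=\frac{2}{\pi}\cos(\frac{\pi}{2}\|v\|_2)$. Key observations: $\delta$ is strictly positive on the open unit ball and vanishes on the sphere, so $f_q$ preserves the closed unit ball and has $q$ as a fixed point; if $v$ lies in the conic hull of $Q$ and the pairwise angles in $Q$ are at most $\theta_0<\pi/2$, then $v\cdot q\geq 0$ for every $q\in Q$, so norms along an orbit started at $\vec{0}$ are nondecreasing; and the scalar recursion $c_{k+1}=c_k+\frac{2}{\pi}\cos(\frac{\pi}{2}c_k)$ obtained by repeating a single query converges to $1$, so under persistent updates $\|v_k\|\to 1$.

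The main obstacle is that $c^*$ is generally not an element of $Q$, so it cannot be reached by a single repeated $f_q$; the only freedom is the ordering and multiplicity of queries. My plan is to cycle through $Q$ in round-robin order. Because $\delta(v_k)\to 0$ as $\|v_k\|\to 1$, each full cycle contributes an additive update proportional, to leading order, to $\sum_{q\in Q}q$, so after rescaling the unit direction $v_k/\|v_k\|$ stabilizes inside the convex cone of $Q$; this cone sits inside the spherical cap of angular radius $\arccos(\sqrt{1-R^2})$ about $c^*$. Combined with $\|v_k\|\to 1$, the quantity $e_k\cdot q$ therefore approaches a value at least $\lambda$ uniformly in $q\in Q$, and the strict margin $\varepsilon$ from Jung then produces a finite step $N$ at which $e_N\cdot q>\lambda$ holds simultaneously for all $q$. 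Making the uniform-in-$q$ convergence rigorous---in particular, quantifying the decay of the transverse-to-$c^*$ component per cycle---is the main technical effort; if round-robin turns out to be too crude, the same conclusion should follow from a greedy rule that at each step chooses the query on which the current embedding is most deficient, combined with a Lyapunov argument on $V(v)=\max_{q\in Q}(\lambda-v\cdot q)$, which satisfies $V(f_{q^*}(v))\leq V(v)-\delta(v)\cos\theta_0$ whenever $q^*$ is the minimizer of $v\cdot q$ over $Q$.
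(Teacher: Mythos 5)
Your proposal follows the same skeleton as the paper's proof: invoke Jung's theorem to produce a single unit direction $c^*$ lying within angular distance $\arccos(\lambda)$ of every $q\in Q$, then argue that compositions of the update maps drive $\vec{0}$ to a point near $c^*$ with norm near $1$. Your geometric half is actually more careful than the paper's: the paper uses the spherical Jung theorem and then asserts that the geodesic center is simultaneously a unit vector and a convex combination $\sum_i\eta_i q_i$ of distinct unit vectors (which cannot literally both hold), whereas your chordal computation $\mathrm{diam}(Q)\le\sqrt{2(1-\lambda^2)}$, Euclidean Jung with constant $\to 1/\sqrt{2}$, and the projection bound $q\cdot(c/\|c\|)\ge(1+\|c\|^2-R^2)/(2\|c\|)\ge\sqrt{1-R^2}\ge\lambda$ reach the same constant cleanly, and your insistence on a strict margin $\varepsilon$ and a finite truncation $N$ matches the proposition's actual statement (finite composition, strict inequality) better than the paper's limit-and-$\ge$ conclusion does.

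The gap is in the dynamical half, and you have located it correctly, but neither of your proposed completions closes it as written (nor does the paper, which disposes of this step in a single asserted sentence). Concretely: (i) the observation that $v_k/\|v_k\|$ stays in the conic hull of $Q$, hence in the cap of angular radius $\arccos\lambda$ about $c^*$, only yields $\hat{v}_k\cdot q\ge\cos(2\arccos\lambda)=2\lambda^2-1<\lambda$ by the triangle inequality; cone membership together with $\|v_k\|\to 1$ is therefore not sufficient, and you genuinely must prove contraction of the component transverse to $c^*$. (ii) The greedy Lyapunov inequality $V(f_{q^*}(v))\le V(v)-\delta(v)\cos\theta_0$ is correct, but the total available decrement need not exceed $V(\vec{0})=\lambda$: if the process stalls with $V\ge 0$ then $\sum_k\delta(v_k)$ must be finite, hence $\|v_k\|\to 1$ and the sum can be barely above $\lim_k\|v_k\|\le 1$, while the hypothesis only guarantees $\cos\theta_0\ge\lambda^2<\lambda$; so the claimed decrements may total less than $\lambda$ and the argument can stall. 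A correct completion has to use that the stalled limit $v_\infty$ is a unit-norm nonnegative combination of $Q$ whose greedy weights equalize the deficits across $Q$ --- exactly the "main technical effort" you defer, which is the entire content of the step the paper also leaves unproved.
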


\begin{proof}
\label{pr}
Each query embedding $q \in Q$ can be represented as a point on the $d$-dimensional unit sphere $\mathbb{S}^{d-1}$, i.e., there exists a bijective mapping $\varphi : Q \to N \subseteq \mathbb{S}^{d-1}(0,1)$ for all $q \in Q$.

Assuming $\forall q_1, q_2 \in Q$, their angle satisfies $\angle(q_1, q_2) < \theta$, the corresponding point set $N$ satisfies $\forall p_1, p_2 \in N, d_{\mathbb{S}^{d-1}}(p_1, p_2) < \theta$, where $d_{\mathbb{S}^{d-1}}$ denotes the geodesic distance on $\mathbb{S}^{d-1}$.

According to \textbf{Jung's Theorem on Spheres} \cite{jung-theory1,jung-theory2} in $\mathbb{S}^{d-1}$, there exists a ball $P = \mathbb{B}^{d-1}(c_p,r_p)$ containing $N$ whose geodesic radius $r_P$ satisfies:
\[
\theta \geq 2\arcsin\left(\sqrt{\frac{d+1}{2d}}\sin(r_P)\right) 
\quad \xrightarrow{d \to \infty} \quad 
\theta \geq 2\arcsin\left(\sqrt{\frac{1}{2}}\sin(r_P)\right).
\]
Since we typically deal with very high embedding dimensions in practice, we can approximate $d \to \infty$.

From Jung's Theorem, the center $c_p$ of the covering ball $P$ lies within the convex hull of $N$. Therefore, the unit vector $c = \varphi^{-1}(c_p)$ satisfies:
\begin{enumerate}[label=\arabic*)]
\item $\forall q \in Q, \angle(c, q) \leq r_P \implies \langle c, q \rangle \geq \cos(r_P).$
\item There exist coefficients $\{\eta_i\}_{i=1}^k \subseteq \mathbb{R}_+$ with $\sum_{i=1}^k \eta_i = 1$ such that $c = \sum_{i=1}^k \eta_i q_i.$
\end{enumerate}

From (2), it follows that there exists an update sequence $\{q_{i_j}\}_{j=1}^m \subseteq Q$ such that:
\[
e_{\text{final}} = \lim_{m \to \infty} (f_{q_{i_m}} \circ \cdots \circ f_{q_{i_1}})(\mathbf{0}) = c.
\]
Since $f(x) = 2\arcsin\left(\sqrt{\frac{1}{2}}\sin(r_P)\right)$ is monotonically decreasing:
\[
\lambda \leq \cos(r_P) \implies 2\arcsin\left(\sqrt{\frac{1}{2}}\sin(\arccos(\lambda))\right) \geq \theta \implies \forall q \in Q, \langle e_{\text{final}}, q \rangle \geq \lambda,
\]
which satisfies the proposition's requirements.
\end{proof}
\section{More Detailed Case Study}

\subsection{An Example of a Complete Retrieval Process}
\label{appendix::complete-log}

In \Cref{fig:complete-log}, we present a complete example of our retrieval process. This essentially illustrates the detailed procedure of the earlier Case Study in \Cref{fig:case-study-2}.

\begin{figure}
    \centering
    \includegraphics[width=\textwidth, height=\textheight, keepaspectratio]{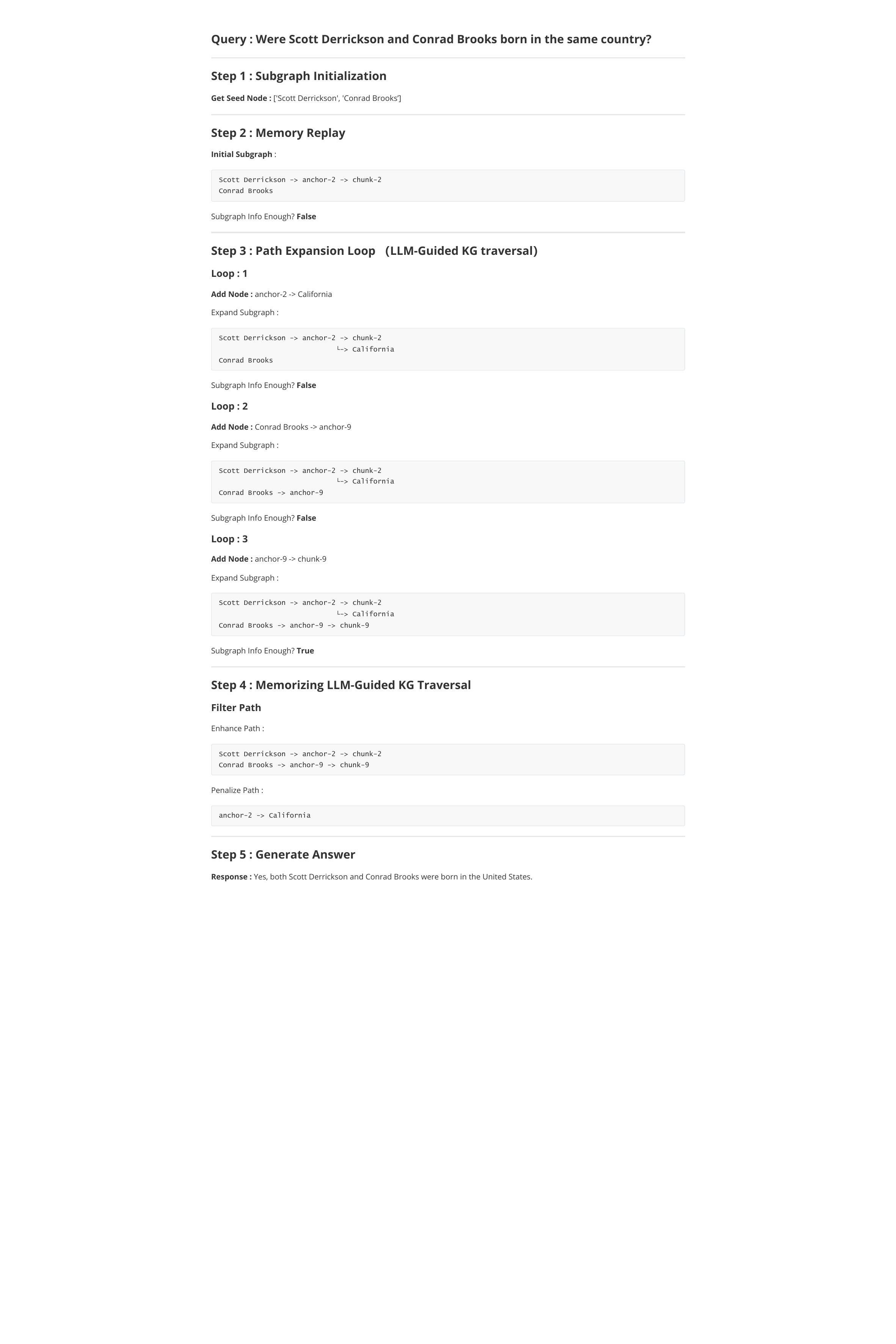}
    \caption{An Example of a Complete Retrieval Process}
    \label{fig:complete-log}
\end{figure}

\subsection{WebUI Screenshots}
\label{appendix:webui-screenshots}

As shown in \Cref{fig:webui-case-study}, we present screenshots of \ours's web interface, which essentially correspond to the content of the earlier Case Study in \Cref{fig:case-study-1}.

\begin{figure}[htbp]
    \centering
    \begin{subfigure}[b]{0.45\textwidth}
        \includegraphics[width=\textwidth]{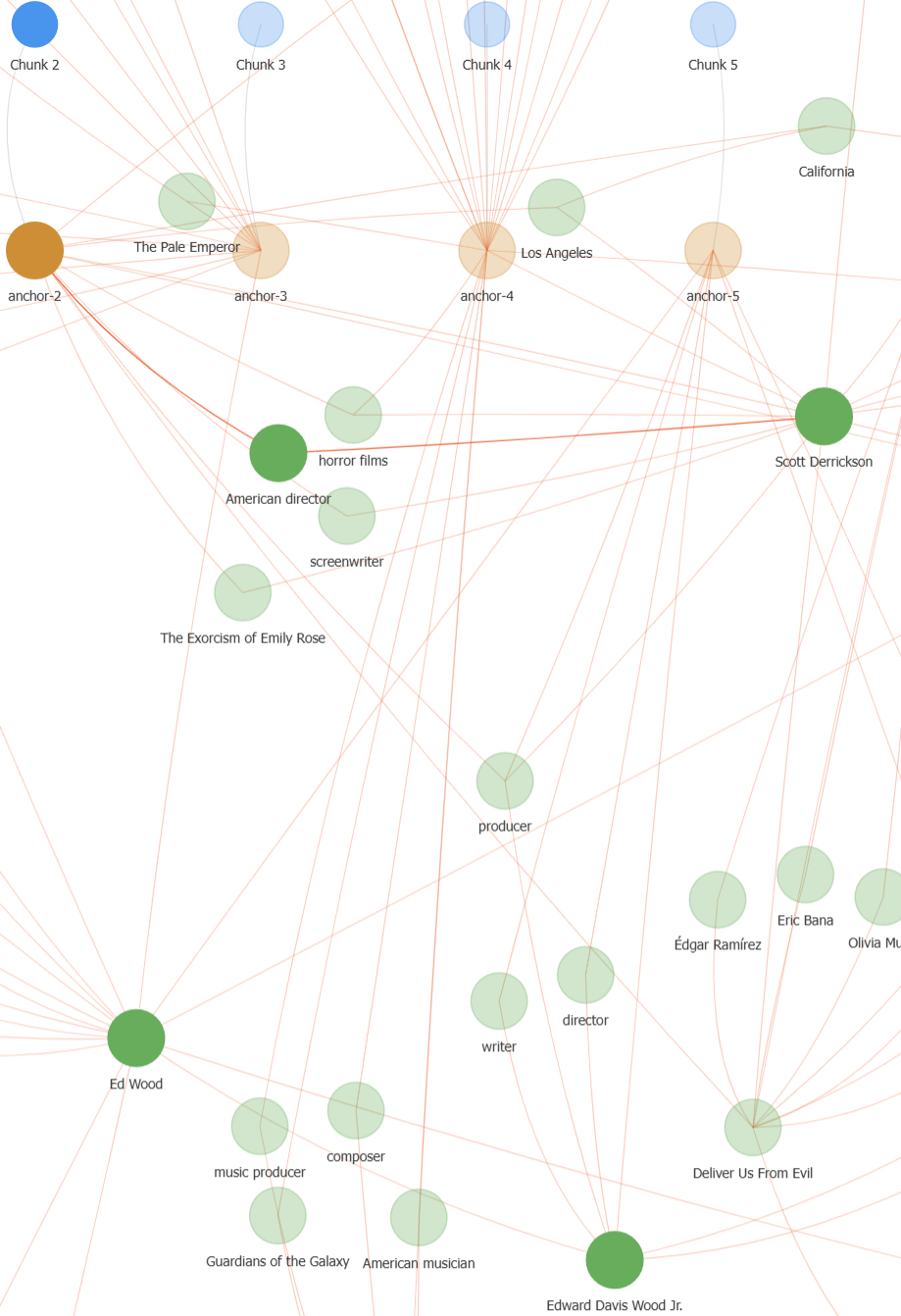}
        \caption{KG For First Query}
    \end{subfigure}
    \hfill
    \begin{subfigure}[b]{0.45\textwidth}
        \includegraphics[width=\textwidth]{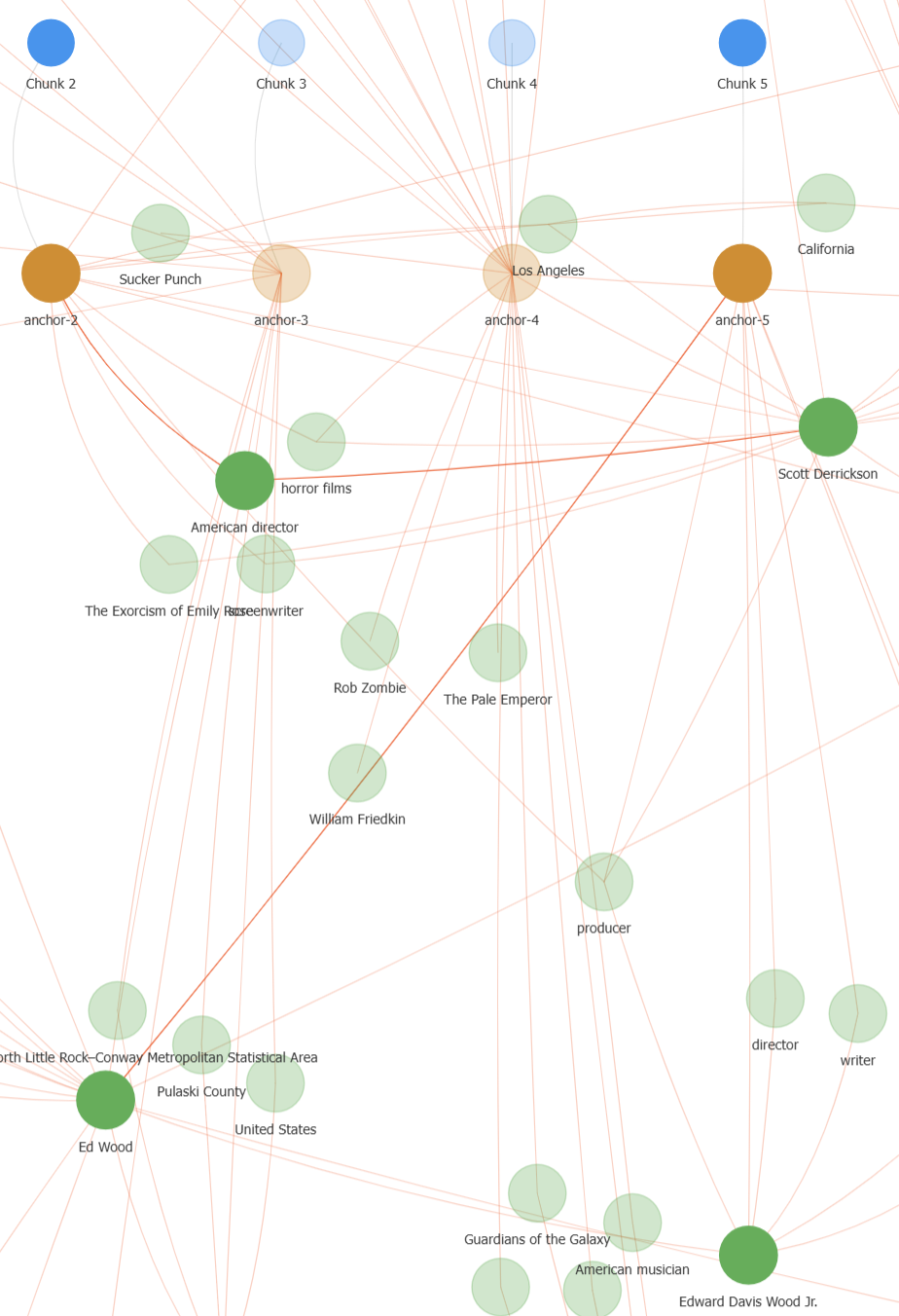}
        \caption{KG For Second Query}
    \end{subfigure}
    \caption{WebUI Screenshots}
    \label{fig:webui-case-study}
\end{figure}

\section{Broader Impact \& Limitations}
\label{appendix:limitations}

Currently, large language models (LLMs) face two typical challenges stemming from the Transformer architecture: limited context window capacity and the hallucination problem. Retrieval-augmented techniques offer a viable solution to these critical issues and have gained popularity in various application scenarios. The proposed \ours~ achieves high-precision retrieval in diverse complex scenarios, thereby driving comprehensive improvements in human-computer interaction. It enables AI systems to provide more accurate and stable services to society, facilitating the broader integration of AI technologies into daily life.

However, \ours~ still exhibits the following limitations: (1) For extremely large documents, constructing a knowledge graph requires substantial time and computational resources; however, it is still much faster than GraphRAG. (2) While the initial retrieval process with LLM-guided knowledge graph traversal leads to significant accuracy improvements, substantial computational resource consumption remains. Our efficiency enhancements focus on subsequent searches; however, further optimization is needed to address the high computational overhead resulting from multiple LLM calls during initialization.


\begin{table}[h]
\centering
\begin{tabular}{p{\textwidth}}
\toprule
\textbf{The Answer Evaluation Prompt} \\ 
\midrule
\begin{lstlisting}[basicstyle=\ttfamily\small, frame=single, breaklines=true, breakindent=0pt]
Given one question, there is a groundtruth and a predict answer. 
Please decide whether they are the same or not in semantic. 
Please only output True or False. 
Question: {question}  
groundtruth = {reference_answer}  
predicted answer = {generated_output}

Only output one word(True or False), without any additional content.
\end{lstlisting} \\
\bottomrule
\end{tabular}
\caption{The Answer Evaluation Prompt}
\label{tab:check-ans-prompt}
\end{table}

\begin{table}[h]
\centering
\begin{tabular}{p{\textwidth}}
\toprule
\textbf{The Node Selection Prompt} \\ 
\midrule
\begin{lstlisting}[basicstyle=\ttfamily\small, frame=single, breaklines=true, breakindent=0pt]
You are a comprehensive analysis expert currently executing a search in a knowledge graph, evaluating the current search path and determining the next search node.  
The knowledge graph we've constructed is structured as follows:  
Entities are divided into two types: one can be called a general **entity**, and the other, more specific, is referred to as an **anchor**.  
Chunks are connected through anchors to achieve contextual linkage. Users search by querying general entity nodes until they reach an anchor node, thereby accessing the corresponding chunks.  
Anchors can connect to both general entities and other anchors (and may also link to chunks), while queryable entities can only connect to other queryable entities or anchors.  
You are currently at node **{c_node}**. Based on the current search path and the conditions for determining the next search node, combined with the information in the knowledge graph, please provide the most suitable next search node.  
The already traversed path is composed of:  
- **{entity_list}**: List of entities visited.  
- **{chunk_list}**: List of chunks accessed.  
- **{edge_list}**: List of edges traversed.  
The adjacency relations of the current node **{c_node}** are:  
- **{relation_cnode}**: Relations leading to the next searchable entity nodes.  
- **{connection_cnode}**: Relations leading to the next anchor nodes connected to chunks.  
Your visit history: {c_node_list}, please avoid repeatedly accessing the same node multiple times.
Additionally, I will provide you with summary information of the chunks bound to the anchors you have traversed and may potentially traverse for reference: {anchor_chunk_titles}.
Consider the following two types of information:  
1. The semantic relationship between the already queried path, the summarized **{chunk_summary}**, and the original query **{query}**.  
2. The edge weights of the nodes connected to the current node (in **{relation_cnode}** and **{connection_cnode}**). A higher weight indicates greater importance.  
Determine the most suitable next node (which can be either a searchable entity node or an anchor node).  
If the current node has no unexplored adjacent nodes, then based on (1), select the most suitable node from the already traversed path (each node in the path may still have unexplored connections. For example, in A -> B -> C, if C has no next node, you need to evaluate (1) and decide whether returning to B is the best choice. In this case, your task is simply to return B). 
Note that you should never select a chunk that has already been selected.
Anchor nodes are connected to their contextual anchor nodes in the original document. If you need to access contextual information from a section, you can traverse through these anchors.  
Please first analyze the above steps, then derive your answer.  
At the end of your response, provide the answer in the specified format-only the node type (either **entity** or **chunk**) followed by its ID, without additional explanatory text.  
\end{lstlisting} \\
\bottomrule
\end{tabular}
\caption{The Node Selection Prompt}
\label{tab:node-selection-prompt}
\end{table}

\begin{table}[h]
\centering
\begin{tabular}{p{\textwidth}}
\toprule
\textbf{The Knowledge Graph Filtering Prompt} \\ 
\midrule
\begin{lstlisting}[basicstyle=\ttfamily\small, frame=single, breaklines=true, breakindent=0pt]
Now, I have completed a search in a knowledge graph database.  

The structure of the constructed knowledge graph is as follows:  
Entities are divided into two types: one can be called a general **entity**, and the other, more specific, is referred to as an **anchor**.  
Chunks are connected through anchors to achieve contextual linkage. Users search by querying general entity nodes until they reach an anchor node, thereby accessing the corresponding chunks.  
Anchors can connect to both general entities and other anchors (and may also link to chunks), while queryable entities can only connect to other queryable entities or anchors.  

Please assist with the following analysis:  

Based on the given search path, chunk summaries, and query, analyze which edges and chunks contain valuable information for answering the current query.  
The current query is **{query}**, and the relationship edges of the search path are **{edge_list}**, which consists of nodes and chunks. The chunk summaries are **{chunk_summary}**.  

1. **General Judgment Criteria**: Based on the summary text and previous responses, determine whether they are directly relevant to the current query.  
   - **Relevant Criteria**: Can directly support/refute the conclusion of the question or provide key evidence.  
   - **Irrelevant Criteria**: Redundant information, off-topic, or replaced by more accurate data.  
2. **Chunk Relevance Assessment**: Apply the above criteria to each chunk in the summary.  
3. **Edge Relevance Assessment**: Apply the above criteria to each edge.  

The output should include your thought process followed by a JSON-formatted result. Refer to the example I provided.  
Please note that the chunk_id must be a single positive integer only. For example, "chunk:1" is an incorrect chunk ID, while "1" is correct.
Also, please prioritize selecting valid information from the chunk whenever possible, as the chunk always contains the most complete information.

**Output Format**:  
(Your thought process in here)  
```cot-ans  
{{
  "edges": [list of useful edges (using edge IDs)],  
  "chunks": [list of useful chunks (using chunk IDs)]  
}}
```
\end{lstlisting} \\
\bottomrule
\end{tabular}
\caption{The Knowledge Graph Filtering Prompt}
\label{tab:kg-filtering-prompt}
\end{table}

\begin{table}[h]
\centering
\begin{tabular}{p{\textwidth}}
\toprule
\textbf{The Question Transformation Prompt} \\ 
\midrule
\begin{lstlisting}[basicstyle=\ttfamily\small, frame=single, breaklines=true, breakindent=0pt]
I'm going to give you a question, the correct answer, and supporting evidence. 
Based on this information, please rewrite the question as a multiple-choice question with four options. 
In the multiple-choice question you create, the correct option should be {new-ans}.  
When creating the incorrect options (distractors), make sure they are plausible based on the question and evidence, so test-takers can't easily guess the right answer. 
Question: {query}  
Correct answer: {ans}  
Supporting evidence: {evidence} 

When you output the question, provide only the question and its options (A, B, C, D). 
Here's an example of the expected output format:

Example Output:
What is the capital of the United States?
A. New York
B. Washington
C. San Francisco
D. Los Angeles
\end{lstlisting} \\
\bottomrule
\end{tabular}
\caption{The Question Transformation Prompt}
\label{tab:question-transformation-prompt}
\end{table}

\begin{table}[h]
\centering
\begin{tabular}{p{\textwidth}}
\toprule
\textbf{The Similar Questions Rewriting Prompt} \\ 
\midrule
\begin{lstlisting}[basicstyle=\ttfamily\small, frame=single, breaklines=true, breakindent=0pt]
Now I'll give you a question, and I want you to rewrite it as a different question that means the same thing but is phrased differently.
Original question: {query}

Example:
Original question: Is Beijing the capital of China?
Your Output: Is the capital of the People's Republic of China Beijing?

When you respond, just give me your rewritten question.
\end{lstlisting} \\
\bottomrule
\end{tabular}
\caption{The Similar Questions Rewriting Prompt}
\label{tab:similar-rewrite-prompt}
\end{table}

\begin{table}[h]
\centering
\begin{tabular}{p{\textwidth}}
\toprule
\textbf{The Subtle Difference Questions Rewriting Prompt} \\ 
\midrule
Question Rewrite \\
\hline
\begin{lstlisting}[basicstyle=\ttfamily\small, frame=single, breaklines=true, breakindent=0pt]
Now I'll give you a question, and I want you to rephrase it in a way that remains as similar as possible to the original but may yield a different answer.
Please note that your rephrased question must be answerable based on the reference material.
Original question: {query}
Reference Material: {context}

Example:
Original question: Are both Beijing and Shanghai cities in China?
Your output: Are both Beijing and Chengdu cities in China?

When responding, just provide your rephrased question.
\end{lstlisting} \\
\midrule
\hline
Answer Rewrite \\
\hline
\begin{lstlisting}[basicstyle=\ttfamily\small, frame=single, breaklines=true, breakindent=0pt]
Now I'll give you a question, and I want you to answer it based on the reference material.
Please note that your answer must come from the reference material.
Question: {query}
Reference Material: {context}

Example:
Original question: Are both Beijing and Shanghai cities in China?
Your output: Yes

When responding, just provide your answer.
\end{lstlisting} \\
\bottomrule
\end{tabular}
\caption{The Subtle Difference Questions Rewriting Prompt}
\label{tab:different-rewrite-prompt}
\end{table}

\begin{table}[h]
\centering
\begin{tabular}{p{\textwidth}}
\toprule
\textbf{The Option Selection Prompt} \\ 
\midrule
\begin{lstlisting}[basicstyle=\ttfamily\small, frame=single, breaklines=true, breakindent=0pt]
Instruction: Given a question and an original answer, please rewrite the original answer. 

If the original answer is not related to any option in the question, output "I don't know". Otherwise, rewrite the answer to only contain the actual response to the question without any related analysis or references. 
If the Original answer outputs "I don't know", directly output "I don't know". 
Please output the rewritten answer directly. 

Question = {question} 
Original answer = {generated-answer} 
\end{lstlisting} \\
\bottomrule
\end{tabular}
\caption{The Option Selection Prompt}
\label{tab:option-selection-prompt}
\end{table}

\begin{table}[h]
\centering
\begin{tabular}{p{\textwidth}}
\toprule
\textbf{The Entity Extract Prompt} \\ 
\midrule
\begin{lstlisting}[basicstyle=\ttfamily\small, frame=single, breaklines=true, breakindent=0pt]
Extract entities and entity relationships from the following text, then output the complete merged statements of entities and entity relationships.  
An entity should be a simple and easy-to-understand word or phrase.  
An entity should be a meaningful word or phrase. For example, in the sentence "David is seventeen years old," "David" is a meaningful entity, but "seventeen years old" is not.  
An entity is a persistent concept with broad significance, not a temporary one. For example, "twenty years old," "one hundred dollars," or "building collapse" cannot be entities.  
At the same time, an entity typically refers to a specific thing or concept with a clear identity or definition. For example, in the sentence "The distance between New York and Boston is not far," "New York" and "Boston" are entities, but "distance" is not.  
When extracting entities, this process should be precise and deliberate, not arbitrary or careless.  
Entity types include organizations, people, geographical locations, events, objects, professional concepts, etc.  
An entity relationship is simply a predicate statement that describes the subject, object, and their relationship.  
Please note that the entities you extract must not include conjunctions like 'or' or 'and'-they should be precise and standalone.

The output format is as follows:  
["entity_1", "entity_2", ..., "entity_n"]

Example 1:  
Given text: This travel guide is very detailed, including introductions to popular attractions, recommendations for local delicacies, and practical transportation guides.  
Output:  
["travel guide", "attractions introduction", "food recommendations", "transportation guide"]

Example 2:  
Given text: In this world, police are supposed to catch thieves.  
Output:  
["police", "thieves"]

Please note: Your final output must strictly follow the required JSON format and should not include any additional content.
\end{lstlisting} \\
\bottomrule
\end{tabular}
\caption{The Entity Extract Prompt}
\label{tab:entity-extract-prompt}
\end{table}

\begin{table}[h]
\centering
\begin{tabular}{p{\textwidth}}
\toprule
\textbf{The Relation Extract Prompt} \\ 
\midrule
\begin{lstlisting}[basicstyle=\ttfamily\small, frame=single, breaklines=true, breakindent=0pt]
For the chunk of text I'm about to input, it contains the following named entities: {entity_list}.
Please extract the relationships between these named entities. Each relationship should be a predicate phrase describing the connection between the subject and the object.
For example, in "Tom" "raises" "dog", "raises" is the relationship. After extracting a relationship, combine it with the subject and object to form a complete sentence.

Your final output should be a JSON-formatted list where each sub-list contains three elements:
[The subject of the relationship, The complete relationship sentence, The object of the relationship]

I'll provide some examples next for your reference when generating the output.

Example 1:  
Given text: This travel guide is very detailed, including introductions to popular attractions, recommendations for local delicacies, and practical transportation guides.  
Output:  
[
  ["travel guide", "travel guide includes attractions introduction.", "attractions introduction"],
  ["travel guide", "travel guide includes food recommendations.", "food recommendations"],
  ["travel guide", "travel guide includes transportation guide.", "transportation guide"]
]

Example 2:  
Given text: In this world, police are supposed to catch thieves.  
Output:  
[
  ["police", "police are supposed to catch thieves.", "thieves"]
]

Please note: Your final output must strictly follow the required JSON format and should not include any additional content.
\end{lstlisting} \\
\bottomrule
\end{tabular}
\caption{The Relation Extract Prompt}
\label{tab:relation-extract-prompt}
\end{table}

\end{document}